\documentclass[technote, 10pt, twoside]{IEEEtran}   
\usepackage{cite} 
\pdfoutput=1
%
\ifCLASSINFOpdf 
  \usepackage[pdftex]{graphicx}
\else
  \usepackage[dvips]{graphicx}  
\fi
 
%
\usepackage[cmex10]{amsmath} 
\usepackage{amsfonts}
\usepackage{amsthm}
\usepackage{mathtools}  
\usepackage{algorithmic}  
\usepackage[hidelinks]{hyperref} 
\usepackage{array}
\usepackage{stfloats}
\usepackage{subcaption}
\usepackage{url}
\usepackage{tikz}
\usepackage{mathtools}
\usetikzlibrary{arrows, decorations.pathmorphing, backgrounds, positioning,
fit, shapes.geometric} 


\usepackage[finalold]{shared/trackchanges}

\addeditor{ZH} 
\addeditor{IH}

\newcommand{\pos}{y} 
\newcommand{\inp}{r} 
\newcommand{\totalInput}{u}
\newcommand{\relCoupling}{c}

\newcommand{\dreference}{d_\mathrm{ref}} 

\newcommand{\weightGain}{\mu}

\newcommand{\contNumCoef}{q}
\newcommand{\contDenCoef}{p}
\newcommand{\vehNumCoef}{b}
\newcommand{\vehDenCoef}{a}
\newcommand{\olnum}{\psi} 
\newcommand{\olden}{\phi} 

\newcommand{\lapDom}{s}
\newcommand{\vehicleTf}{G}
\newcommand{\controllerTf}{R}
\newcommand{\systemOutput}{y}
\newcommand{\systemOutputVect}{{\systemOutput}}

\newcommand{\openLoop}{M}
\newcommand{\openLoopGain}{\openLoop_{\min}}

\newcommand{\olOrd}{n}
\newcommand{\wn}{i}


\newcommand{\numVeh}{N}



\newcommand{\lapl}{{L}}
\newcommand{\redLapl}{{\lapl_\mathrm{r}}}
\newcommand{\matPowEl}[3]{(#1^{#3})_{#2}}
\newcommand{\laplPowEl}[2]{\matPowEl{\lapl}{#1}{#2}}
\newcommand{\redLaplPowEl}[2]{\matPowEl{\redLapl}{#1}{#2}}
\newcommand{\spatEig}{\lambda} 
\newcommand{\spatEigWn}{\lambda_\wn}
\newcommand{\spatEigMin}{\spatEig_{\min}}
\newcommand{\spatEigMax}{\spatEig_{\max}}
\newcommand{\eigVect}{v} 

\newcommand{\spatFreqWn}{\theta_\wn} 
\newcommand{\matJ}{{J}}
\newcommand{\matEigVect}{{V}}

\newcommand{\stateVect}{{x}}
\newcommand{\stateVectMod}{{\hat{\stateVect}}}
\newcommand{\inpDiag}{\hat{\inp}} 
\newcommand{\inputCoef}{g}
\newcommand{\inpVect}{{\inp}}
\newcommand{\outVect}{{\pos}}

\newcommand{\stateVectDer}{\dot{\stateVect}}
\newcommand{\stateVectModDer}{\dot{\stateVectMod}}


\newcommand{\matC}{{C}}

\newcommand{\matAi}{{A}}
\newcommand{\matBi}{{B}}
\newcommand{\matCi}{{C}}

\newcommand{\Id}{{I}}
\newcommand{\In}{\Id_\olOrd}
\newcommand{\IN}{\Id_\numVeh}

\newcommand{\diagTransBlock}{{F}}
\newcommand{\diagTransBlockWn}{\diagTransBlock_{\wn}}
\newcommand{\diagTransBlockEig}{T}
\newcommand{\diagTransBlockEigWn}{\diagTransBlockEig_{\spatEigWn}}

\newcommand{\diagTransBlockEigMin}{\diagTransBlockEig_{\spatEig_{\min}}}
\newcommand{\diagTransBlockN}{\diagTransBlockEig_{\numVeh}}

\newcommand{\charPolCoef}{\alpha}





\tikzstyle{genGraphNode} = [minimum size=3mm, thick, node
distance=5mm] 
\tikzset{
    >=stealth', bend angle=10 }
\tikzstyle{normalNode} = [genGraphNode, circle,draw=black,fill=black!10,thick]
\tikzstyle{controllingNode} = [genGraphNode,
circle,draw=black!20,fill=black!60,very thick, text=white, font=\bfseries] 
\tikzstyle{observingNode} =[genGraphNode, circle,draw=black!50,fill=white,thick]
\tikzstyle{placeholder}=[genGraphNode, circle]

\tikzstyle{block} = [draw, fill=white!20, rectangle, 
    minimum height=2em, minimum width=3em] 
\tikzstyle{gain} = [draw,regular polygon,
        regular polygon sides=3, minimum height=2em, minimum
width=1em, inner sep=0pt] 
\tikzstyle{gainLeft} = [gain, shape border rotate=-90]
\tikzstyle{gainRight} = [gain, shape border rotate=90]
\tikzstyle{gainDown} = [gain, shape border rotate=180]
\tikzstyle{gainUp} = [gain, shape border rotate=90]
\tikzstyle{sum} = [draw, fill=white!20, circle,inner sep=1pt]
\tikzstyle{input} = [coordinate]
\tikzstyle{output} = [coordinate]
\tikzstyle{pinstyle} = [pin edge={to-,thin,black}]
\tikzstyle{junction} = [draw, circle, minimum height=0.02em, fill=black, inner sep=0pt]

\newtheoremstyle{assump}
{1pt}
{1pt}
{}
{}
{}
{)}
{-1.8em}
{}

\theoremstyle{plain}
\newtheorem{theorem}{Theorem}
\newtheorem{lemma}{Lemma}
\newtheorem{corollary}{Corollary}
\theoremstyle{definition}
\newtheorem{definition}{Definition}

\theoremstyle{assump}

\hyphenation{}

\begin{document}
%
\title{Nonzero bound on Fiedler eigenvalue causes exponential growth of
H-infinity norm of vehicular platoon}
%

        \author{Ivo~Herman, Dan~Martinec,
        Zden\v{e}k~Hur\'{a}k, Michael \v{S}ebek
\thanks{All authors are with the Faculty of Electrical Engineering, Czech
Technical University in Prague, Department of Control Engineering, Karlovo
namesti 13, Prague, Czech Republic.
E-mail address:
\{ivo.herman, martinec.dan, hurak, sebekm1\}@fel.cvut.cz} \thanks{The research
was supported by the Grant Agency of the Czech Republic within the projects GACR 13-06894S (I.~H.) and GACR P103-12-1794 (D.~M. and M.~S.). Involvement of Z.H. was supported by Fulbright Program.
}
\thanks{Manuscript received December 04, 2013; revised August 12, 2014.}

\thanks{\copyright 2014 IEEE. Personal use of this material is permitted.
Permission from IEEE must be obtained for all other uses, in any current or future media, including reprinting/republishing this material for advertising or promotional purposes, creating new collective works, for resale or redistribution to servers or lists, or reuse of any copyrighted component of this work in other works.}}


\maketitle

\begin{abstract} 
We consider platoons composed of identical vehicles and controlled in a
distributed way, that is, each vehicle has its own onboard controller. The
regulation errors in spacing to the immediately preceeding and following
vehicles are weighted differently by the onboard controller, which thus implements an
asymmetric bidirectional control scheme. The weights can vary along the platoon.
We prove that such platoons have a nonzero uniform bound on the second smallest
eigenvalue of the graph Laplacian matrix---the \textit{Fiedler eigenvalue}.
Furthermore, it is shown that existence of this bound always signals undesirable
scaling properties of the platoon. Namely, the $\mathcal{H}_\infty$ norm of the
transfer function of the platoon grows exponentially with the number of vehicles
regardless of the controllers used.
Hence the benefits of a uniform gap in the spectrum of a Laplacian with an
asymetric distributed controller 
are paid for by
poor scaling as the number of vehicles grows.
\end{abstract}

\begin{IEEEkeywords}
Vehicular platoons, Fiedler eigenvalue, harmonic instability, eigenvalues uniformly bounded from
zero, asymmetric control, exponential scaling.
\end{IEEEkeywords} 

\IEEEpeerreviewmaketitle

\section{Introduction} 
\IEEEPARstart{P}{latoons} of vehicles offer promising solutions for future
highway transport. They provide several advantages to the current highway
traffic---they increase both the capacity and the safety of the highway and they
allow the driver to relax\note[ZH]{Tady by docela patrily dve nebo tri citace
nejakych prehledovych clanku. Vice viz poznamky v prilozenem
textu}\note[IH]{To sice jo, ale rozhodne na to neni prostor, uz tak musim
nekde odebrat ctvrt stranky. Jedna se vejde, ale nevim jaka}.

Several platoon control architectures have been proposed in the literature. They
differ mainly in presence of direct interactions with the platoon leader. If the
information from the leader is permanently available to all following vehicles,
the platoon can behave very well and is scalable. On the other hand, it requires
establishing some communication among the vehicles, which can be disturbed or
even denied by an intruder. Control schemes relying on communication comprise
\textit{leader following} and \textit{cooperative adaptive cruise control}. For
their overview and properties see e.~g., \cite{Seiler2004a,
Sebek2011, Milanes2014}\note[ZH]{Tady bych to taky jeste o jednu dve tri citace
aktualnich prehledovych praci rozsiril, vice v poznamkach.}\note[IH]{Pridana
citace Milanes2014, kde je seznam praktickych projektu i nejake teoreticke
prace}.

Among the communication-free scenarios are the \textit{predecessor following},
\textit{constant time-headway spacing} and \textit{bidirectional control}
(\textit{symmetric} or \textit{asymmetric}). Recognizing their limitations,
these architectures may still be useful as backup control solutions during
communication failures. One of the key theoretical issues investigated with
communication-free control schemes is \annote[ZH]{\textit{string stability}}{K
tomuto viz poznamky.}. Although there are variations among the concepts found in
the literature (for a review see, e.g., \cite{Ploeg2014}), the key idea is that
the platoon is \textit{string unstable} if the impact of a disturbance affecting one vehicle gets amplified as it
propagates along the string (platoon). The predecessor-following strategy is
string unstable if there are at least two integrators in the open loop of each
vehicle \cite{Seiler2004a}. \annote[ZH]{Two integrators are a reasonable
assumption}{K tomuto diskuze v samostatnych poznamkach}, as they allow both
velocity tracking and constant spacing \cite{Barooah2005}. The constant
time-headway spacing policy increases the required intervehicular distances in
response to the increased speed of the leader, which preserves the string
stability \cite{Middleton2010}. For symmetric bidirectional formations, the
response to noise (coherence) scales polynomially with the size of the platoon
\cite{Bamieh2012}. The paper also reveals the bad effect of increasing the number of integrators in the open loop.
 
Asymmetric controllers for platoons have received much attention after
\cite{Barooah2009a} was published. The authors show that for small controller
asymmetry,
the convergence rate of the least stable eigenvalue to zero
(as the number of vehicles grows) decreases. 
Later the paper \cite{Hao2012} shows that with a non-vanishingly small
asymmetry, the least stable eigenvalue does not actually converge to zero but to
some nonzero constant---a~uniform nonzero lower bound can be achieved.
\note[IH]{Tady jsem popis uniform boundedness vyhodil - Uniform boundedness
means that the bound on the Laplacian eigenvalues does not depend on the number of vehicles. }{}
This result guarantees a
controllability of the formation consisting of an arbitrary number of vehicles.
In \cite{Lin2012} optimal localized control for asymmetric formation is
proposed. The authors show that asymmetric control has beneficial effects on
various performance measures. They do, however, assume that each vehicle in the
platoon has the knowledge of the desired (leader's) velocity of the platoon.
That information has to be communicated permanently to each vehicle by the
leader.
Our work to be presented differs in that we allow no communication among the vehicles.

The results in \cite{Tangerman2012} reveal a significant drawback of the
asymmetric control scheme. The paper analyzes a platoon of vehicles modeled by
double integrators with a~PD controller (equivalent to relative position and
velocity feedback). They show that the peak in the magnitude frequency response
of the position of the last vehicle to the change in the leader's position grows
exponentially in the number of vehicles---a phenomenon labelled as
\textit{harmonic instability}. In contrast, if the controller is symmetric, the
peak in the magnitude frequency response (the $\mathcal{H}_\infty$ system norm)
only grows linearly \cite{Veerman2007}. 
Note that string instability merely means that the
$\mathcal{H}_\infty$ norm is growing but harmonic instability means that it is
growing very fast (in the number of vehicles).

With these results, several questions arise. Is harmonic instability present
with any controller or can it be mitigated by some judicious choice of the
controller structure? Can varying the asymmetry in the platoon counteract
harmonic instability? Is harmonic instability an inherent property of an
asymmetric control, or even of any uniformly bounded nearest neighbor
interaction? In this paper we answer these questions.

We extend \cite{Tangerman2012} to any open-loop model of a vehicle and any
platoon with uniformly bounded eigenvalues. Our results also extend
\cite{Seiler2004a} from the predecessor-following architecture to any
bidirectional asymmetric configuration. Moreover, we show that harmonic
instability is, in fact, caused by the uniform boundedness and it is not
possible to achieve a good scalability both in the convergence time (the bound
on eigenvalues) and in the frequency domain (the $\mathcal{H}_\infty$ system
norm). Some trade-off is necessary. This paper extends our previous conference
paper \cite{Herman2014} to arbitrary asymmetric formations with controller gains
and asymmetries varying among the vehicles.

The paper is structured as follows. First we give some preliminaries and provide definition of the harmonic instability. Then we prove uniform boundedness of a general
platoon. In the next section the proof for the harmonic instability of an asymmetric control scheme is given. Finally some special cases are discussed and simulation results are shown.

\section{Preliminaries and model}
We assume $\numVeh$ vehicles indexed by $i=1,2, \ldots, \numVeh$, travelling in
a one-dimensional space. The first vehicle (indexed 1) is called
\textit{the~leader} and it is controlled independently of the rest of the
platoon. We analyze a bidirectional control, where each onboard controller
measures the distances to its immediate predecessor and follower and strives
to keep these close to the desired (reference) distance. It sets different
weights to the \annote[ZH]{front and rear}{Nahradil jsem misto puvodniho forward
and backward, protoze si nejsem jisty, zda backward neni spise zpetny ve smyslu
jdouci dozadu.} regulation errors, hence \textit{asymmetric bidirectional
control}. We assume no intervehicular communication; all information is obtained
only locally by the onboard sensors.

We study how the disturbance created by unexpected movements of the leader
propagates along the platoon towards the final vehicle. Hence, we analyze the
properties of the~transfer function
\annote[ZH]{$\diagTransBlockN(\lapDom)$}{Nechces to spise prejmenovat na $T_n$?
Stejne ten prvni ze dvou dolnich indexu k nicemu nepouzivas, a bude to
citelnejsi protoze strucnejsi.} from the leader's position to the position of
the last vehicle as depicted in Fig.~\ref{fig:totalTransferFunction}. Its
frequency response and the way it scales with the number of vehicles $\numVeh$
is used to prove harmonic instability for a given configuration.
\begin{definition}[Harmonic stability\cite{Tangerman2012}]
	Let \annote[ZH]{$\displaystyle  \gamma_\numVeh \equiv \mathrm{sup}_{\omega \in
	\mathbb{R}^+}|\diagTransBlockN(\jmath \omega)|$}{Navrhuji prejmenovat takto z puvodniho $A_N$. Myslim, ze to tu vice sedi. V teorii rizeni se casto pro Hinf normy pouziva, viz ``$\gamma$ iterations''. A naopak to $A_N$ je v rizeni strasne moc spojene s temi stavovymi maticemi... Dale, nejsem si jisty vhodnosti pouziti $\equiv$ pro ``je definovano jako''. To je asi beznejsi trebas proste $:=$. A nebo se na to vykasli uplne, a napis jen $=$, protoze stejne v tom rozlisovani, co je rovnice a co je definice, nejsi dusledny :-) Stejne jako ja a jako 99.9\% autoru.}, where $\jmath=\sqrt{-1}$. The platoon is called \textit{harmonically stable} if it is asymptotically stable and if $ \displaystyle \lim\!\sup_{\numVeh \to \infty}
	\gamma_\numVeh^{1/N} \leq 1$. Otherwise it is \textit{harmonically unstable}.
\end{definition}
An interpretation of harmonic instability is that some oscillatory motion of the
leader has its amplitude magnified as it is propagated through the platoon and
the growth of the magnitude is exponential in $\numVeh$ \cite{Tangerman2012}.
\annote[ZH]{In order words,}{Tady mi to skutecne prislo, ze je to jen ta stejna
vec recena jinymi slovy, to neni nejaky dalsi vysledek ci dalsi dopad.} the
$\mathcal{H}_\infty$ norm of $\diagTransBlockN(s)$ grows exponentially with
$\numVeh$.
\begin{figure}
\centering
	\includegraphics[width=0.4\textwidth]{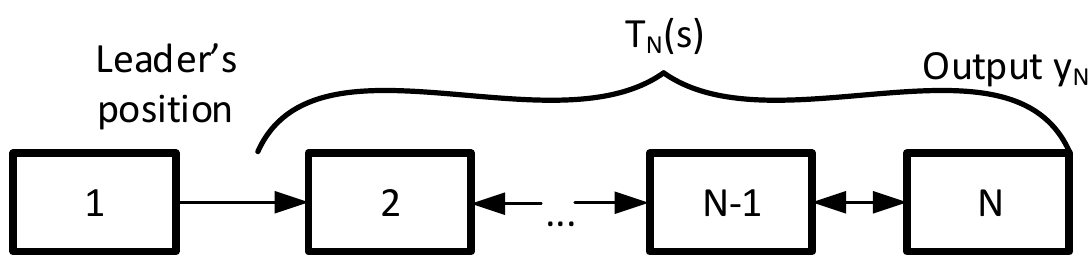}
	\caption{Transfer function from the leader's position (second vehicle's input)
	to the trailing vehicle's output.}
	\label{fig:totalTransferFunction}  
\end{figure}


\subsubsection*{Notation}
We denote \annote[ZH]{matrices by capital letters}{Vidis, i proto jsem chtel mit
to skalarni cislo reckym pismenem. Zkusme se toho v budoucnu drzet, kdyz to jen
trosku pujde. Jasne, ne vzdy to pujde.}, vectors by lowercase letters and an
element in a matrix $A$ is denoted as $a_{ij}$. We use $\lapDom$ as the complex
variable in Laplace transform. The $\wn$th vector in a canonical basis is
denoted $e_i \in \mathbb{R}^{\numVeh \times 1}$, that is, $e_\wn=[0, \ldots,
1,\ldots,0]^T$, with 1 on the $\wn$th position. Identity matrix of size
$\numVeh$ is denoted as $\Id_\numVeh$.

\subsection{System model}
Each vehicle is described by an identical SISO transfer function
$\vehicleTf(\lapDom) = \frac{\vehNumCoef(\lapDom)}{\vehDenCoef({\lapDom})}$. The
\annote[ZH]{output is}{Tedy kdyz uz je receno, co je vystupem, tak zvedava duse
ctenare jiste bude touzit po poznani fyzikalni povahy vstupu...} the vehicle's
position $\pos_i$. Dynamic controller described by a transfer function
$\controllerTf(\lapDom) = \frac{\contNumCoef(\lapDom)}{\contDenCoef(\lapDom)}$
is used to close the feedback loop. The input to the controller is defined in
(\ref{eq:regError}). The open-loop transfer function is
$\openLoop(\lapDom)=\controllerTf(\lapDom)\vehicleTf(\lapDom)$. From now on we
will only use the open loop in the analysis. Its state-space description is
\begin{equation}
	\dot{\stateVect}_i = \matAi \stateVect_i + \matBi \totalInput_i, \qquad  \outVect_i = \matCi \stateVect_i,
\end{equation}
with $\stateVect_i \in \mathbb{R}^{\olOrd \times 1}$ as the state vector,
$\matAi \in \mathbb{R}^{\olOrd\times \olOrd}, \matBi \in \mathbb{R}^{\olOrd \times 1}, \matCi \in \mathbb{R}^{1 \times \olOrd}$, and $\pos_i \in \mathbb{R}$ as the position of the vehicle.

The \annote[ZH]{input has two parts}{Priznam se, ze toto je pro me docela 
matouci. Vice v poznamkach.} $\totalInput_i=\relCoupling_i+\inp_i$, where
$\relCoupling_i$ is the part caused by coupling between vehicles and $\inp_i$
external control signal, e.g., reference distances. 

In platooning, the input $\totalInput_i$ to each vehicle is a weighted sum of
spacing errors to its predecessor in the string and its successor. Spacing error
to the previous vehicle is weighted with $\weightGain_i>0$ and the error to
succeeding vehicle with $\weightGain_i \epsilon_i$. The asymmetry level
$\epsilon_i \geq 0$ is therefore a ratio between front and rear gains. The
weight $\weightGain_i$ as well as the asymmetry level $\epsilon_i$ can vary
along the platoon. The input to the vehicle is then
\begin{equation}
	\totalInput_i = \weightGain_i(\pos_{i-1}-\pos_{i}-\dreference) -
	\weightGain_i \epsilon_i(\pos_i- \pos_{i+1}-\dreference)
	\label{eq:regError}
\end{equation}
for $i=2, \ldots, \numVeh$ and \annote[ZH]{$\dreference$ is a desired
distance}{To jsem Ti rikal, ze Bassamovi studenti prezdivaji ``notation Nazi''?
:-) Taky mam poznamku k tomuto v prilozenem textu}. The inter-vehicle coupling
is then $\relCoupling_i = \weightGain_i(\pos_{i-1}-\pos_{i}) - \weightGain_i
\epsilon_i(\pos_i-\pos_{i+1})$ and the external input in this case is
$\inp_i=\weightGain_i(-1+\epsilon_i)\dreference$. \annote[ZH]{In further
development, we do not limit the external command $\inp_i$ to be given only by
the reference distance, it is treated as a general signal.}{Tomu uplne
nerozumim.
Jak pak budu moci interpretovat (2), kdyz d, tedy zatim delta, bude neco jineho nez
referencni distance? Nevyhodime tuto poznamku radsi?}\annote[IH]{}{Ona to prave
referencni vzdalenost neni. Pro dvojsmerne rizeni tam bude
$\dreference-\dreference=0$. Ani to $(-1+\epsilon)$ neni referencni vzdalenost,
ale nejaka funkce. A odezva na skok by pak byla odezva na zmenu referencni
vzdalenosti jen u prvniho. Je to skutecne libolny signal. Uz jen delat
sinusovou zmenu reference jen dopredu pro dukaz harmonicke nestability je
divne.}
	
In a compact form, the stacked vector of inputs is $\totalInput = -\lapl \systemOutputVect + \inpVect$, where $\lapl$ is a graph Laplacian representing an interconnection and has a form 
\vspace{-5pt}

{
\setlength{\arraycolsep}{2pt}
\begin{equation}
\lapl= \left[ \begin{matrix}
							0 & 0 & 0 & 0 & \ldots & 0 \\
							-\weightGain_2 & \weightGain_2(1+\epsilon_2) & -\weightGain_2 \epsilon_2
							& 0 & \ldots & 0
							\\
							0 & -\weightGain_3 & \weightGain_3(1+\epsilon_3) &
							-\weightGain_3\epsilon_3 & \ldots & 0
							\\
							\vdots & \vdots & \vdots & \vdots & \ddots & \vdots \\
							0 & 0 & \ldots &  0 & -\weightGain_\numVeh & \weightGain_\numVeh
 							\end{matrix} 
							\right].
							\label{eq:laplacian}
\end{equation}
}
\setlength{\arraycolsep}{3pt}

The vectors are $\totalInput=[\totalInput_1, \ldots,\totalInput_\numVeh]^T$,
$\systemOutputVect=[\pos_1, \ldots,\pos_\numVeh]^T$ and since the external input
can be arbitrary, we take $\inpVect=[\inp_1,\ldots,\inp_\numVeh]^T$. The leader
is controlled externally with the input $u_1$. The trailing vehicle has no
follower and its input is $\totalInput_\numVeh = \weightGain_\numVeh(\pos_{\numVeh-1}-\pos_{\numVeh}-\dreference)$.

\begin{lemma}
	The graph Laplacian in (\ref{eq:laplacian}) has these properties
	\begin{enumerate}
	  \item[a)] $\spatEig_1=0$ is an eigenvalue of $\lapl$ with its eigenvector
	  $\mathbf{1}=[1,1,\ldots,1]^T$, and this eigenvalue is simple,
	  \item[b)] all its eigenvalues are real, lie in the right-half plane, i.e.
	  $\spatEigWn \geq 0$ and are bounded by $\spatEig_i \leq \spatEigMax = 2\max(\lapl_{ii})$.
	  \item[c)] $\lapl$ can be partitioned as
	  \begin{equation}	
	 \lapl = \left[\begin{array}{c|c}
	                0 & 0\\
	                \hline	    
	                \times & {\redLapl}
	               \end{array}\right].
	               \label{eq:decompLapl}
	\end{equation}
	and the spectrum of the reduced Laplacian $\redLapl$ coincides with all the
	nonzero eigenvalues of $\lapl$.
	\end{enumerate}
	\label{lem:lapProp}
\end{lemma}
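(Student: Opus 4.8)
The plan is to handle the three claims in an order that lets the structural decomposition~(c) carry most of the weight. Claim~(a) is immediate in one direction by inspection: every row of $\lapl$ in \eqref{eq:laplacian} sums to zero (the leader's row is identically zero, each interior row reads $-\weightGain_i+\weightGain_i(1+\epsilon_i)-\weightGain_i\epsilon_i=0$, and the last row is $-\weightGain_\numVeh+\weightGain_\numVeh=0$), so $\lapl\oneVect=0$ and $\oneVect$ is an eigenvector for $\spatEig_1=0$. Simplicity of this eigenvalue I would defer until after~(c).

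Next I would prove~(c). The zero leader row means $\lapl$ already has the form in \eqref{eq:decompLapl}, with $(1,1)$ block equal to the scalar $0$, zero first (block) row, off-diagonal block $\times=[-\weightGain_2,0,\ldots,0]^T$, and $\redLapl$ the trailing $(\numVeh-1)\times(\numVeh-1)$ tridiagonal matrix. Expanding the characteristic polynomial along the first row gives the factorization
\[
\det(\spatEig\IN-\lapl)=\spatEig\,\det(\spatEig\,\Id_{\numVeh-1}-\redLapl),
\]
so $\mathrm{spec}(\lapl)=\{0\}\cup\mathrm{spec}(\redLapl)$. It then remains to show $0\notin\mathrm{spec}(\redLapl)$, i.e. that $\redLapl$ is nonsingular; this simultaneously settles~(c) and the simplicity asserted in~(a). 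I would compute $\det\redLapl$ by the standard tridiagonal recurrence on the trailing principal minors $P_k$ (the determinant of the submatrix of $\redLapl$ on rows and columns $k,\ldots,\numVeh$), namely $P_k=\weightGain_k(1+\epsilon_k)P_{k+1}-\weightGain_k\epsilon_k\weightGain_{k+1}P_{k+2}$, with base cases $P_\numVeh=\weightGain_\numVeh$ and $P_{\numVeh+1}=1$. An easy induction establishes the telescoping identity $P_k=\weightGain_kP_{k+1}$, whence $\det\redLapl=P_2=\prod_{i=2}^{\numVeh}\weightGain_i>0$ because every $\weightGain_i>0$. Thus $\redLapl$ is nonsingular and $\spatEig_1=0$ is simple.

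For claim~(b) I would first argue realness. The symmetrically placed off-diagonal entries of $\redLapl$ satisfy $(\redLapl)_{i,i+1}(\redLapl)_{i+1,i}=\weightGain_i\epsilon_i\weightGain_{i+1}\ge0$. When all these products are strictly positive, a diagonal similarity $D^{-1}\redLapl D$ symmetrizes the tridiagonal matrix (choosing the $d_i$ so the two off-diagonals become the common value $-\sqrt{\weightGain_i\epsilon_i\weightGain_{i+1}}$), so $\redLapl$ has only real eigenvalues; if some $\epsilon_i=0$, the corresponding super-diagonal entry vanishes and $\redLapl$ splits into block-triangular pieces, each of the same tridiagonal type, so the argument applies blockwise. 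Together with the trivially real $\spatEig_1=0$, all eigenvalues of $\lapl$ are real.

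Finally, the interval bound follows from Ger\v{s}gorin's theorem applied to $\lapl$: for each interior row the off-diagonal absolute sum equals the diagonal, $\weightGain_i+\weightGain_i\epsilon_i=\weightGain_i(1+\epsilon_i)=\lapl_{ii}$ (likewise $\weightGain_\numVeh=\lapl_{\numVeh\numVeh}$, while the leader row contributes only the point $\{0\}$), so every disk has the form $\{z:|z-\lapl_{ii}|\le\lapl_{ii}\}$. Intersecting these disks with the real axis---legitimate because the spectrum was shown to be real---yields $0\le\spatEigWn\le2\lapl_{ii}\le2\max(\lapl_{ii})=\spatEigMax$, which is precisely~(b). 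The only genuinely delicate point is the realness argument: the degenerate directions $\epsilon_i=0$ destroy the diagonal-similarity symmetrization, and it is the block-triangular reduction that rescues it; the determinant recurrence, though the computational heart of the proof, is routine.
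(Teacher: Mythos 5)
Your proof is correct, and it takes a genuinely different (more self-contained) route than the paper for the substantive parts. The paper dispatches simplicity of the zero eigenvalue in (a) by invoking the standard graph-theoretic fact that a directed spanning tree gives a simple zero Laplacian eigenvalue, handles realness in (b) by citing a known lemma on tridiagonal real matrices with non-positive off-diagonal entries, and for (c) merely remarks that a similarity transformation together with the zero first row ``reveals the eigenstructure.'' You instead derive everything from the block-triangular structure itself: the characteristic polynomial factors as $\spatEig\,\det(\spatEig \Id_{\numVeh-1}-\redLapl)$, and your trailing-minor recurrence $P_k=\weightGain_k(1+\epsilon_k)P_{k+1}-\weightGain_k\epsilon_k\weightGain_{k+1}P_{k+2}$, which telescopes to $P_k=\weightGain_k P_{k+1}$, gives the explicit value $\det\redLapl=\prod_{i=2}^{\numVeh}\weightGain_i>0$ (I checked the induction; it is right). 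This single computation simultaneously yields nonsingularity of $\redLapl$, simplicity of the zero eigenvalue, and the coincidence of $\mathrm{spec}(\redLapl)$ with the nonzero spectrum --- a logical dependency the paper leaves implicit, since its part (c) tacitly needs $0\notin\mathrm{spec}(\redLapl)$, which in the paper really rests on the spanning-tree argument of part (a). For realness you prove the cited tridiagonal fact from scratch by diagonal symmetrization, correctly flagging and repairing the degenerate case $\epsilon_i=0$ via block-triangular splitting; the Ger\v{s}gorin step is identical to the paper's. What the paper's route buys is brevity and reliance on standard citable results; what yours buys is a fully elementary argument needing no external references, plus the extra quantitative fact $\det\redLapl=\prod_{i=2}^{\numVeh}\weightGain_i$, at the cost of length.
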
 
\begin{proof}
a) Simple zero eigenvalue follows from the presence of a directed spanning tree
in the platoon. b) $\lapl$ \annote[ZH]{is a tridiagonal real matrix}{No tak
hlavne je to ale Laplacian, ne? A pak se asi k realnosti vlastnich cisel ani
netreba vyjadrovat. At nerikaji, ze objevujeme
Ameriku.}\annote[IH]{}{Realna vlastni cisla rozhodne nepatri mezi zname a
bezne veci.
Zname je to pro symetricke matice, ale L symetricka matice zdaleka neni.
Vsadim se, 8 z 10 lidi v rizeni nebude znat, ze tridiagonalni matice tohoto
typu maji jen realna vlastni cisla.
Obecne ma Laplacian vlastni cisla komplexni, coz prave cini navrh rizeni pro
formaci tak tezkym.} with non-positive off-diagonal terms, so its eigenvalues
are real \cite[Lem.
0.1.1]{Fallat2011}.
The upper bound follows from Ger\v{s}gorin's theorem \cite[Thm.
6.1.1]{Horn1996}. c) Combining the property that the first row of $\lapl$ is
zero and one of the eigenvalues is zero, similarity transformation reveals the
eigenstructure described in the lemma.
\end{proof}	
	Using the last point, we can concentrate on the formation without the
	leader, because we removed the row corresponding to the leader and still kept all
	the nonzero eigenvalues. That's why the input to $\diagTransBlockN(s)$ acts at
	the second vehicle.
\begin{definition}[Uniform boundedness] The eigenvalues $\spatEig_i$ of a matrix $\lapl \in \mathbb{R}^{\numVeh \times \numVeh}$ are uniformly bounded from zero if there exists a constant $\spatEigMin>0$ such that $\spatEig_i  \geq \spatEigMin$ for  $i=2,\ldots,\numVeh$ and $\spatEigMin$ does not depend on $\numVeh$.
\end{definition}
If the onboard controllers of all vehicles are asymmetric and have
front gains stronger than the rear ones, then the uniform boundedness can be
achieved.
The proof of the following theorem is in the Appendix A.
\begin{theorem}
If there is $\epsilon_{\max} < 1$ such that $\epsilon_i \leq
\epsilon_{\max} \, \forall i$ and $\forall \numVeh$, then the nonzero
eigenvalues of the Laplacian $\lapl$ given in (\ref{eq:laplacian}) are uniformly bounded with 
$\spatEigMin \geq \frac{(1-\epsilon_{\max})^2}{2+2\epsilon_{\max}}$.
	\label{thm:uniformBound}
\end{theorem}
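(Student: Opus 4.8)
The plan is to use Lemma~\ref{lem:lapProp}c): the nonzero eigenvalues of $\lapl$ are precisely the eigenvalues of the $(\numVeh-1)\times(\numVeh-1)$ reduced Laplacian $\redLapl$ (indexed by $\wn=2,\dots,\numVeh$), so it suffices to bound $\spatEig_{\min}(\redLapl)$ from below. This $\redLapl$ is tridiagonal with diagonal $\weightGain_\wn(1+\epsilon_\wn)$ (and $\weightGain_\numVeh$ in the last row), subdiagonal entries $-\weightGain_{\wn}$ and superdiagonal entries $-\weightGain_\wn\epsilon_\wn$. On each edge $(\wn,\wn+1)$ the two mirrored off-diagonal entries multiply to $\weightGain_\wn\epsilon_\wn\weightGain_{\wn+1}>0$ whenever $\epsilon_\wn>0$, so a diagonal similarity symmetrizes $\redLapl$ without changing its spectrum, turning each off-diagonal pair into $-\sqrt{\weightGain_\wn\epsilon_\wn\weightGain_{\wn+1}}$ and leaving the diagonal intact. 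The resulting symmetric $S$ factors as $S=P\hat S P$ with $P=\operatorname{diag}(\sqrt{\weightGain_2},\dots,\sqrt{\weightGain_\numVeh})$, where $\hat S$ is the \emph{gain-free} symmetric tridiagonal matrix with diagonal $1+\epsilon_\wn$ (last entry $1$) and off-diagonals $-\sqrt{\epsilon_\wn}$. By the Rayleigh quotient $\spatEig_{\min}(\redLapl)=\spatEig_{\min}(S)\ge(\min_\wn\weightGain_\wn)\,\spatEig_{\min}(\hat S)$, so the gains enter only through $\min_\wn\weightGain_\wn$ and the task reduces to the single weight-free estimate $\spatEig_{\min}(\hat S)\ge\frac{(1-\epsilon_{\max})^2}{2+2\epsilon_{\max}}$. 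Edges with $\epsilon_\wn=0$ merely split $\redLapl$ into independent block-lower-triangular pieces of the same form, so I may assume every $\epsilon_\wn>0$ (or pass to the limit $\epsilon_\wn\downarrow0$, the bound being a closed condition).

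To estimate $\spatEig_{\min}(\hat S)$ I would prove $\hat S-c\Id\succeq0$ for $c=\frac{(1-\epsilon_{\max})^2}{2+2\epsilon_{\max}}$. The cleanest route is to exhibit $\hat S-c\Id$ as a sum, over the edges $(\wn,\wn+1)$, of $2\times2$ positive-semidefinite blocks on coordinates $\{\wn,\wn+1\}$, each with off-diagonal $-\sqrt{\epsilon_\wn}$ and nonnegative diagonal shares $\alpha_\wn,\beta_\wn$ obeying $\alpha_\wn\beta_\wn\ge\epsilon_\wn$; feasibility is the requirement that, at each node, the shares donated to its incident edges sum to the shifted diagonal ($1+\epsilon_\wn-c$ interior, $1-c$ at the terminal node). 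Equivalently, in quadratic-form language, I bound every cross term by the AM--GM inequality $-2\sqrt{\epsilon_\wn}\,x_\wn x_{\wn+1}\ge-\tau^{-1}\epsilon_\wn x_\wn^2-\tau x_{\wn+1}^2$ with one fixed $\tau$, and require each coefficient of $x_\wn^2$ to stay $\ge c$. With $\tau=\sqrt{\epsilon_{\max}}$ the worst interior node (at $\epsilon_\wn=\epsilon_{\max}$) leaves residual $1+\epsilon_{\max}-2\sqrt{\epsilon_{\max}}=(1-\sqrt{\epsilon_{\max}})^2$, and the elementary fact $(1-\sqrt{\epsilon_{\max}})^2\ge\frac{(1-\epsilon_{\max})^2}{2+2\epsilon_{\max}}$ (which reduces, after clearing denominators, to $(1-\sqrt{\epsilon_{\max}})^2\ge0$) yields the claimed constant.

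The main obstacle is selecting this single split $\tau$ (equivalently, the per-node partition of the diagonal) so that \emph{every} interior node \emph{and} the terminal vehicle clear the common threshold $c$ at once, uniformly over all admissible $\epsilon_\wn\in[0,\epsilon_{\max}]$ and independently of the gains and of $\numVeh$. The terminal node is the delicate one because its diagonal is only $1$; however, with $\tau=\sqrt{\epsilon_{\max}}$ it retains the larger residual $1-\sqrt{\epsilon_{\max}}\ge c$, so the binding case is an interior vehicle at maximal asymmetry. Since the final threshold depends on $\epsilon_{\max}$ alone, it does not grow with $\numVeh$, which is precisely the uniform boundedness asserted.
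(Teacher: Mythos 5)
Your proposal is correct, and it reaches the stated bound by a genuinely different route than the paper. The paper also works with the reduced Laplacian $\redLapl$ and a diagonal similarity, but it uses a single geometric scaling $P=\mathrm{diag}(1,p,p^{2},\ldots,p^{\numVeh-2})$ with the scalar $p=\frac{1}{2}\left(1+\frac{1}{\epsilon_{\max}}\right)$ and then applies Ger\v{s}gorin's theorem in its diagonally scaled form (Lemma~\ref{cor:diagTransf}): after scaling, every row becomes diagonally dominant and the distance of the $i$th Ger\v{s}gorin disk from zero is $d_i=\weightGain_i\bigl[(1+\epsilon_i)-\tfrac{1}{p}-p\,\epsilon_i\bigr]$, which is minimized at $\epsilon_i=\epsilon_{\max}$ and gives exactly $\frac{(1-\epsilon_{\max})^2}{2+2\epsilon_{\max}}$ (times $\weightGain_i$, removed by the paper's ``without loss of generality $\weightGain_i\ge 1$''). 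You instead exploit the sign pattern of the tridiagonal $\redLapl$ to symmetrize it exactly, factor out the gains through the congruence $S=P\hat{S}P$ and the Rayleigh quotient, and control the gain-free quadratic form edge by edge with AM--GM at the fixed split $\tau=\sqrt{\epsilon_{\max}}$. Your route buys a strictly sharper constant: $\spatEigMin\ge(\min_\wn\weightGain_\wn)\,(1-\sqrt{\epsilon_{\max}})^{2}$, which dominates the claimed bound because their ratio $\frac{2(1+\epsilon_{\max})}{(1+\sqrt{\epsilon_{\max}})^{2}}$ is at least $1$, and which coincides with the tight value $(1-\sqrt{\epsilon})^{2}$ that the paper quotes for the uniform-asymmetry special case; it also re-derives the realness of the spectrum asserted in Lemma~\ref{lem:lapProp}b) for free. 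The paper's route buys brevity and uniformity: the Ger\v{s}gorin row estimate applies verbatim when some $\epsilon_\wn=0$, whereas your symmetrization degenerates there and needs the block-triangular splitting or the limit $\epsilon_\wn\downarrow 0$ that you correctly supply. Both proofs share the same caveat about gains: the $\weightGain$-free bound in the theorem statement really requires a normalization such as $\weightGain_i\ge 1$, and your explicit factor $\min_\wn\weightGain_\wn$ makes that dependence visible rather than hiding it in a ``without loss of generality''.
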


\subsection{Vehicle interconnection and diagonalization} 
Using a standard consensus or multi-vehicular formation notation
\cite{Fax2004a}, the overall formation model is 
\begin{IEEEeqnarray}{rCl}
	\stateVectDer &=& \bigg[ \IN \otimes \matAi - (\IN \otimes \matBi \matCi)(\lapl
	\otimes \In)\bigg] \stateVect + (\IN \otimes \matBi) \inpVect,
	\nonumber \label{eq:overallStateEq}\\
	\outVect &=& (\IN \otimes \matCi) \stateVect \label{eq:overallMeasEq},
\end{IEEEeqnarray}
where $\stateVect \in \mathbb{R}^{\numVeh \olOrd \times 1}$ is a
\textit{stacked} state vector and $\otimes$ is the Kronecker product. We apply the approach of Theorem
1 from \cite{Fax2004a} but use Jordan instead of Schur decomposition. This will
block diagonalize the system. The state transformation is
$\stateVect=(\matEigVect \otimes \In)\stateVectMod$, where $\matJ =
\matEigVect^{-1}\lapl \matEigVect$ is the Jordan form of $\lapl$.
The matrix $\matEigVect=[\eigVect_1, \ldots, \eigVect_\numVeh]$ is formed by
(generalized) eigenvectors of $\lapl$ and $\eigVect_{ji}$ is the $j$th element
of the vector $\eigVect_i$. A block diagonal system is
\begin{IEEEeqnarray}{rCl}
\stateVectModDer &=& \left[\IN \otimes \matAi - \matJ \otimes \matBi \matCi
\right] \stateVectMod + (\matEigVect^{-1} \otimes \matBi) \inpVect,
\label{eq:overallSystemEqMod}
\\
\outVect &=& (\matEigVect \otimes \matCi) \stateVectMod \label{eq:overallOutputEqMod}.
\end{IEEEeqnarray}
Consider a Jordan block in the block diagonal matrix (\ref{eq:overallSystemEqMod}). If it
is of size one, it has the form
\begin{equation}
\stateVectModDer_\wn = \left[\matAi - \spatEigWn \matBi \matCi \right]
\stateVectMod_\wn +  \matBi e_\wn^T \matEigVect^{-1} \inpVect, \quad
\hat{\outVect}_i=\matC \stateVectMod_i
\label{eq:diagBlockStateSpace}
\end{equation}
This equation can be viewed as an output feedback system with a feedback
gain $\spatEigWn$ and output $\hat{\pos}_i$. Its transfer function is
\begin{equation}
	\diagTransBlockWn(\lapDom)=
	\frac{\openLoop(\lapDom)}{1+\spatEigWn
	\openLoop(\lapDom)}=\frac{\vehNumCoef(\lapDom)\contNumCoef(\lapDom)}{\vehDenCoef(\lapDom)\contDenCoef(\lapDom)
	+ \spatEigWn \vehNumCoef(\lapDom)\contNumCoef(\lapDom)}.
\end{equation}
If the Jordan block has a size larger than one, it corresponds to identical blocks connected in series as in Fig.~\ref{fig:diagonalizedSystemJordan}. 

In the following we assume that all diagonal blocks are asymptotically stable for all $\numVeh$. As all the eigenvalues $\spatEigWn$ are real, design of a stable system is not a difficult task. We can use, e.g., the synchronization region approach \cite{Zhang2011} or the root-locus-like approach \cite{Herman2013}.
\begin{figure}
\centering
	\small
\begin{tikzpicture}[auto,>=latex',node distance=0.6cm and 0.4cm]
		\node [input] (input) {};
		\node [junction, below=of input, below=1em, label=left:$\inp_{2}$] (juncInp) {};
		\node [gainDown, below=of juncInp, below=1.5em, label=right:$\inputCoef_{i+1}$] (inpGain1) {};
		\node [sum, below=of inpGain1, below=1em] (sum) {{+}}; 
	    \node [block, right=of sum] (controller) {$\controllerTf(s)$};
	    \node [block, right=of controller] (system) {$\vehicleTf(s)$};
	    \node [junction, right=of system,label=above:$\hat{\pos}_{i+1}$ ] (y1) {};
	    \node [junction, below=of y1, below=2.5em] (juncOut1) {}; 
	    \draw [->] (controller) -- node[name=u] {} (system);
	    \node [gainRight, left=of juncOut1, left=4em] (gain) {$\spatEig_i$};
	    
	    \draw [->] (gain) -| node[pos=0.94] {$-$} 
        node [near end] {} (sum);  
        \draw [->] (sum) to (controller);
        \draw [-] (input) to (juncInp);
        \draw [->] (juncInp) to (inpGain1);
        \draw [->] (inpGain1) to node[name=ud1, swap] {$\inpDiag_{i+1}$}(sum);
        
    	\node [sum, right=of system, right=0.8cm] (sum2) {{+}};
    	\node [gainDown, above=of sum2, above=1em] (inpGain2) {$\inputCoef_{i}$};
    	\node [block, right=of sum2] (controller2) {$\controllerTf(s)$};
    	\node [block, right=of controller2] (system2) {$\vehicleTf(s)$};
    	\draw [->] (controller2) to node[name=u2] {} (system2);

    	\node [junction, right=of system2, label=above:$\hat{\pos}_{i}$] (y2) {};
    	\node [output, right=of y2, right=0.6cm] (output)  {}; 
    	\draw [-] (system2) to (y2);
    	\draw [->] (y2) to (output);
    	\node [junction, below=of y2, below=2.5em] (juncOut2) {}; 
    	\node [gainRight, left=of juncOut2, left=4em] (gain2) {$\spatEig_i$};
    	\draw [->] (juncOut2) -- (gain2); 
    	\draw [->] (gain2) -| node[pos=0.94] {$-$} 
        node [near end] {} (sum2); 
        \draw [->] (sum2) to (controller2);
        \draw [->] (inpGain2) -- node[name=ud2]{$\inpDiag_i$} (sum2);
        \draw [->] (juncInp) -| (inpGain2);
        \draw [-] (system) to (y1);
        \draw [->] (y1) to (sum2);
        \draw [->] (y1) |- (gain);
        
       \node [gainDown, below=of juncOut1, below=0.7em, label=right:$\eigVect_{\numVeh, i+1}$] (gainOut1) {};
       \node [gainDown, below=of juncOut2, below=0.7em, label=left:$\eigVect_{\numVeh, i}$] (gainOut2) {};
       \node [sum, below= of gainOut1, below=0.5em] (sum3) {+};
       \node [output, left=of sum3, left=3em] (outputFin) {$\pos_{\numVeh}$};
       \draw [->] (y2) to (gainOut2);
       \draw [->] (y1) to (gainOut1);
       \draw [->] (gainOut1) to (sum3);
       \draw [->] (gainOut2) |- (sum3);
       \draw [->] (sum3) to node[name=yFin, swap]{$\pos_{\numVeh}$}(outputFin);
\end{tikzpicture}
\normalsize
\caption{\annote[ZH]{Block diagram for a Jordan block of size two.}{No, on az
tak moc diagonalni neni, ne? Asi spise jednoduse``''?A diagonal block for the
Jordan block of size 2.} The \annote[ZH]{input}{Kdyby ses rozhodl to preznacit v
textu (misto $u$ neco jineho, trebas prave $d$ nebo $r$ nebo $w$), tak nezapomen
i tady v obrazku.} is applied to the second vehicle and the output is the
position of the $\numVeh$th vehicle.
}
	\label{fig:diagonalizedSystemJordan}    
\end{figure}

\section{Harmonic instability} 
To test for harmonic instability, we examine the transfer function
$\diagTransBlockN(\lapDom)$. 
The input to the platoon for such transfer function is
$\inpVect(\lapDom)=[0,\inp_2(s), 0, \ldots, 0]^T=e_2 \inp_2(\lapDom)$.
Based on
(\ref{eq:diagBlockStateSpace}), the input to the diagonal block
$\diagTransBlockWn(\lapDom)$ is the $\wn$th entry in the vector $\inpDiag$ given
by $\inpDiag(\lapDom) = \matEigVect^{-1} e_2 \inp_2(\lapDom) = g
\inp_2(\lapDom)$ with $g=\matEigVect^{-1} e_2$. The output of each block is (see Fig.
\ref{fig:diagonalizedSystemJordan})
	\begin{equation}
		\hat{\pos}_i(\lapDom)=\diagTransBlock_i(s)\inpDiag_i(s) =
		\diagTransBlock_i(s) g_i \inp_2(\lapDom).
	\end{equation}
The position $\pos_\numVeh$ of the $\numVeh$th vehicle can be calculated from
(\ref{eq:overallOutputEqMod}).
It is a weighted sum of the outputs of the blocks $\hat{\pos}$ with the weights
equal to the $\numVeh$th terms in the eigenvectors $\eigVect_{i}$
\begin{IEEEeqnarray}{rCl}
	\pos_\numVeh (\lapDom) &=&
	\sum_{\wn=1}^{\numVeh} \eigVect_{\numVeh \wn}
	\hat{\pos}_i(s)=\left[\sum_{\wn=1}^{\numVeh}
	\eigVect_{\numVeh \wn} \diagTransBlock_\wn(s) g_i
	\right]\inp_2(s), \label{eq:transferFunctionNSum}
\end{IEEEeqnarray}
with which we define
$\diagTransBlockN(s)=\frac{\pos_\numVeh(s)}{\inp_2(s)}=\sum_{\wn=1}^{\numVeh}
\eigVect_{\numVeh \wn} \diagTransBlock_\wn(s) g_i$.
\annote[ZH]{}{Asi bych
ten nazev podsekce dal pryc, vzdyt je to jen jedina podsekce v ramci sekce, takove strukturovani nema cenu. A usetrime docela dost mista. Smazes?}
\note[IH]{Smazano}{}
 
The following product form of $\diagTransBlockN(s)$ holds for general platoons
(both symmetric and asymmetric).
\begin{theorem}\note[ZH]{Ja bych spise Lemma 3 prejmenoval na Theorem. Vzdyt to
je prece silny vysledek. Jeden z hlavnich v tomto clanku, ne? To, ze muzu popis kolony vyjadrit v te seriove verzi. Pak bude i opravnene pouziti toho nasledujici Corollary, protoze ten se skutecne spise pouziva pro dodatecne vysledky bezprostredne plynouci z Theoremu.} The transfer function from the input of the second vehicle to the
	position of the last vehicle in the system (\ref{eq:overallStateEq}) with
	Laplacian (\ref{eq:laplacian}) is given as
	\begin{IEEEeqnarray}{rCl}
		\diagTransBlockN(\lapDom) &=& \frac{1}{\weightGain_2} \prod_{\wn=2}^{\numVeh}
		\spatEigWn \diagTransBlockWn(\lapDom) =\frac{1}{\weightGain_2} \prod_{\wn=2}^{\numVeh}
		\diagTransBlockEigWn(\lapDom) 
		,
		\label{eq:blockProd}
	\end{IEEEeqnarray}
	\label{lem:blockProd} 
\end{theorem}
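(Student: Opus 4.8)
The plan is to avoid manipulating the eigenvector sum in (\ref{eq:transferFunctionNSum}) directly and instead work with the closed loop in matrix form. Writing each open loop as $\systemOutputVect_i = \openLoop(\lapDom)\totalInput_i$ and stacking with $\totalInput = -\lapl\systemOutputVect + \inpVect$ gives $(\Id + \openLoop\lapl)\systemOutputVect = \openLoop\inpVect$, hence $\systemOutputVect = \openLoop(\Id + \openLoop\lapl)^{-1}\inpVect$. This is the same object as the sum in (\ref{eq:transferFunctionNSum}), since $\matEigVect\,\mathrm{diag}(\diagTransBlockWn)\,\matEigVect^{-1}$ is exactly the matrix function $\lambda \mapsto \openLoop/(1+\lambda\openLoop)$ applied to $\lapl$, which equals $\openLoop(\Id+\openLoop\lapl)^{-1}$. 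With $\inpVect = e_2\inp_2$, the leader's zero row (Lemma~\ref{lem:lapProp}c) lets me drop the first coordinate and replace $\lapl$ by $\redLapl$, so that $\diagTransBlockN(\lapDom) = \openLoop\,[(\Id + \openLoop\redLapl)^{-1}]_{\numVeh-1,1}$, i.e. $\openLoop$ times the bottom-left entry of the inverse of the $(\numVeh-1)\times(\numVeh-1)$ \emph{tridiagonal} matrix $\Id + \openLoop\redLapl$.

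First I would compute that bottom-left entry by Cramer's rule. The relevant cofactor comes from the minor obtained by deleting the first row and the last column of $\Id+\openLoop\redLapl$; for a tridiagonal matrix this minor is triangular, with diagonal equal to the subdiagonal entries $-\openLoop\weightGain_{j}$ for $j=3,\ldots,\numVeh$. This collapses the cofactor to a single product. Tracking the cofactor sign $(-1)^{1+(\numVeh-1)}$ against the sign $(-\openLoop)^{\numVeh-2}$ coming from the subdiagonal product, the overall sign is $+1$, and the numerator becomes $\openLoop^{\numVeh-1}\prod_{j=3}^{\numVeh}\weightGain_j$ over $\det(\Id+\openLoop\redLapl)$.

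Next I would evaluate the two determinants. Since by Lemma~\ref{lem:lapProp}c the spectrum of $\redLapl$ is exactly the nonzero eigenvalues $\spatEig_2,\ldots,\spatEig_\numVeh$ of $\lapl$, the denominator is $\det(\Id+\openLoop\redLapl) = \prod_{\wn=2}^{\numVeh}(1+\openLoop\spatEigWn)$. To turn the weight product into the eigenvalue product I need the identity $\det\redLapl = \prod_{j=2}^{\numVeh}\weightGain_j = \prod_{\wn=2}^{\numVeh}\spatEigWn$; I would prove it by adding all columns of $\redLapl$ to the first, which annihilates the first column except for its top entry (the only nonzero row sum, equal to $\weightGain_2$), then expanding along that column and inducting on the platoon length. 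Substituting $\prod_{j=3}^{\numVeh}\weightGain_j = \weightGain_2^{-1}\prod_{\wn=2}^{\numVeh}\spatEigWn$ converts the numerator and yields $\diagTransBlockN = \weightGain_2^{-1}\prod_{\wn=2}^{\numVeh}\spatEigWn\openLoop/(1+\spatEigWn\openLoop) = \weightGain_2^{-1}\prod_{\wn=2}^{\numVeh}\spatEigWn\diagTransBlockWn$, which is (\ref{eq:blockProd}).

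The routine parts are the two determinant evaluations; the step that needs the most care is the cofactor computation, namely establishing the triangular structure of the deleted-row-and-column minor and, above all, matching the sign $(-1)^{\numVeh}$ from the cofactor against the $(-\openLoop)^{\numVeh-2}$ from the subdiagonal product so that no stray sign survives. The weight--eigenvalue determinant identity $\prod_{j=2}^{\numVeh}\weightGain_j = \prod_{\wn=2}^{\numVeh}\spatEigWn$ is the conceptual crux, since it is precisely what produces the $1/\weightGain_2$ prefactor and makes the eigenvalues $\spatEigWn$ appear in the final product.
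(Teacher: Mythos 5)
Your proof is correct, and it takes a genuinely different route from the paper's. The paper stays in the modal picture: it starts from the eigenvector-weighted sum (\ref{eq:transferFunctionNSum}), proves two combinatorial identities for the weights $h_\wn = g_\wn\,\eigVect_{\numVeh\wn}$ in Lemma~\ref{lem:sumHi} --- $\sum_\wn h_\wn\spatEigWn^m=0$ for $m\le \numVeh-3$ (from the bandedness of powers of $\lapl$) and $\sum_\wn h_\wn/\spatEigWn=1/\weightGain_2$ --- and then shows that in the common-denominator expansion (\ref{eq:prodTF}) every mixed term $\olden^i\olnum^j$ is annihilated by these identities, leaving only the $\olnum^{\numVeh-1}$ term with coefficient $\charPolCoef_0/\weightGain_2$, which yields the product. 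You never diagonalize: you write the closed loop as $\openLoop(\Id+\openLoop\lapl)^{-1}$, use the block-triangular structure of $\lapl$ to reduce to the bottom-left entry of $(\Id+\openLoop\redLapl)^{-1}$, collapse the cofactor using the tridiagonal structure (the deleted-row-and-column minor is triangular with the subdiagonal entries $-\openLoop\weightGain_j$ on its diagonal, and your sign bookkeeping $(-1)^{\numVeh}\cdot(-1)^{\numVeh-2}=+1$ is right), and finish with $\det\redLapl=\prod_{j=2}^{\numVeh}\weightGain_j=\prod_{\wn=2}^{\numVeh}\spatEigWn$, whose column-summation-plus-induction proof is sound since the row sums of $\redLapl$ are $(\weightGain_2,0,\ldots,0)^T$ and the deleted submatrix inherits the same structure. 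Both arguments ultimately rest on the same structural fact --- the path topology --- but in different guises: bandedness of $\lapl^m$ for the paper, triangularity of the minor for you. What your route buys: it is more elementary (only determinants and the characteristic polynomial, no eigenvectors), it makes the origin of the $1/\weightGain_2$ prefactor transparent as the ratio $\prod_{j\ge 3}\weightGain_j/\det\redLapl$, and it covers defective Laplacians with no extra work, whereas the paper's proof is written only for the non-defective case and dispatches the defective case with a closing remark. What the paper's route buys is consistency with the rest of its development: it reuses the modal decomposition of Fig.~\ref{fig:diagonalizedSystemJordan} and the sum form (\ref{eq:transferFunctionNSum}) that the harmonic-instability argument is phrased in.
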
 
\vspace{-15pt}

We introduced
$\diagTransBlockEigWn(\lapDom)=\spatEigWn\diagTransBlockWn(\lapDom)$ as a
closed-loop transfer function with gain $\spatEig_i$. The proof is given in the
Appendix B.
\begin{corollary} 
	For at least one integrator in the open loop $\openLoop(s)$,
	the steady-state gain of each block in (\ref{eq:blockProd})
	is $\diagTransBlockEigWn(0)=1, \, \forall i$ and then $\diagTransBlockN(0)=1$
	if and only if $\weightGain_2=1$.
\end{corollary}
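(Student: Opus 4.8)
The plan is to compute the zero-frequency (DC) gain of the product form (\ref{eq:blockProd}) factor by factor and then read off the condition on $\weightGain_2$. First I would recall that each factor is $\diagTransBlockEigWn(\lapDom) = \spatEigWn \diagTransBlockWn(\lapDom) = \frac{\spatEigWn \openLoop(\lapDom)}{1 + \spatEigWn \openLoop(\lapDom)}$, and that the product in (\ref{eq:blockProd}) runs over $\wn = 2, \ldots, \numVeh$, i.e.\ precisely over the \emph{nonzero} eigenvalues of $\lapl$. By Lemma \ref{lem:lapProp}(a)--(b) the zero eigenvalue is simple and all eigenvalues are real and nonnegative, so $\spatEigWn > 0$ for every $\wn \geq 2$; this strict positivity is the one hypothesis I must keep track of, since it is exactly what makes the limit below well defined.

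Second, I would evaluate $\diagTransBlockEigWn$ at $\lapDom = 0$. Because $\openLoop(\lapDom)$ contains at least one integrator, $|\openLoop(\lapDom)| \to \infty$ as $\lapDom \to 0$. Rather than manipulate an infinite quantity directly, I would divide numerator and denominator by $\openLoop(\lapDom)$, giving
\begin{equation}
	\diagTransBlockEigWn(0) = \lim_{\lapDom \to 0} \frac{\spatEigWn}{1/\openLoop(\lapDom) + \spatEigWn} = \frac{\spatEigWn}{0 + \spatEigWn} = 1,
\end{equation}
where $1/\openLoop(\lapDom) \to 0$ is the integrator property and the cancellation is legitimate precisely because $\spatEigWn \neq 0$. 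This establishes the first claim, $\diagTransBlockEigWn(0)=1$ for all $\wn$. I would also note that this is insensitive to the Jordan structure of $\lapl$: a repeated eigenvalue merely repeats a factor equal to $1$, and the zero eigenvalue is excluded from the product, so no factor can be indeterminate.

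Finally I would substitute into (\ref{eq:blockProd}):
\begin{equation}
	\diagTransBlockN(0) = \frac{1}{\weightGain_2} \prod_{\wn=2}^{\numVeh} \diagTransBlockEigWn(0) = \frac{1}{\weightGain_2} \prod_{\wn=2}^{\numVeh} 1 = \frac{1}{\weightGain_2},
\end{equation}
so that $\diagTransBlockN(0) = 1$ holds exactly when $\weightGain_2 = 1$, which is the stated equivalence.

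Once the product form of Theorem \ref{lem:blockProd} is taken as given, the argument is essentially a one-line limit, so the only real subtlety---the point I would be most careful about---is the handling of the integrator, i.e.\ the claim that $\openLoop(0)$ "equals infinity." I would avoid any formal division by infinity by phrasing everything in terms of $1/\openLoop$, whose value at the origin is $0$ exactly when $\openLoop$ has a pole at $\lapDom=0$; this keeps the reasoning clean and, usefully, makes no assumption on the detailed polynomial structure of $\openLoop$ beyond the presence of at least one integrator.
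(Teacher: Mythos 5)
Your proof is correct and follows exactly the route the paper intends: the corollary is an immediate consequence of the product form in Theorem \ref{lem:blockProd}, with each factor $\diagTransBlockEigWn(0)=1$ because the integrator forces $1/\openLoop(\lapDom)\to 0$ while $\spatEigWn\neq 0$, leaving $\diagTransBlockN(0)=1/\weightGain_2$. The paper gives no separate proof (it treats the statement as evident from the product form), and your careful handling of the pole at the origin and of the nonzero eigenvalues simply makes that implicit argument rigorous.
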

Surprisingly, the greater the gain $\weightGain_2$ (coupling with the
leader), the lower the steady-state gain of the platoon.
 
Before stating the main theorem of the paper, we need to introduce the notation
$\openLoopGain(s)=\spatEigMin \openLoop(s)$  and define the closed-loop block
for such open loop as
\begin{equation}
	\diagTransBlockEigMin(s)=
	\frac{\openLoopGain(s)}{1+\openLoopGain(s)}=\frac{\spatEigMin
	\vehNumCoef(s)\contNumCoef(s)}{\vehDenCoef(s)\contDenCoef(s) + \spatEigMin \vehNumCoef(s)\contNumCoef(s)}. \label{eq:diagBlockEigMin}
\end{equation}

\begin{theorem}
If the nonzero eigenvalues of Laplacian in (\ref{eq:laplacian}) are
uniformly bounded and $||
\diagTransBlockEigMin(s) ||_{\infty} > 1$, then the platoon is harmonically
unstable.
\label{thm:stringInstab}
\end{theorem}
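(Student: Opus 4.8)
The plan is to bound $\gamma_\numVeh$ from below by evaluating the product form of Theorem~\ref{lem:blockProd} at a single well-chosen frequency, and then to show that \emph{every} factor in that product exceeds a constant strictly larger than one, uniformly in $\numVeh$. First I would use the hypothesis $\|\diagTransBlockEigMin(s)\|_\infty>1$ to fix a (finite) frequency $\omega^\star\in\mathbb{R}^+$ with $|\diagTransBlockEigMin(\jmath\omega^\star)|>1$. Writing $\diagTransBlockEigMin=\openLoopGain/(1+\openLoopGain)$ with $\openLoopGain=\spatEigMin\openLoop$, the elementary identity $|H/(1+H)|>1\iff\operatorname{Re}(H)<-\tfrac12$ applied to $H=\openLoopGain(\jmath\omega^\star)$ collapses the whole condition into a single scalar inequality, $\spatEigMin\operatorname{Re}\bigl(\openLoop(\jmath\omega^\star)\bigr)<-\tfrac12$, hence $\operatorname{Re}\bigl(\openLoop(\jmath\omega^\star)\bigr)<-\tfrac{1}{2\spatEigMin}<0$.

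The key observation is that this one inequality controls every block at $\omega^\star$. For an arbitrary real gain $\spatEig\ge\spatEigMin$ the same identity gives $|\diagTransBlockEig_{\spatEig}(\jmath\omega^\star)|>1\iff\spatEig\operatorname{Re}\bigl(\openLoop(\jmath\omega^\star)\bigr)<-\tfrac12$. Since $\operatorname{Re}\bigl(\openLoop(\jmath\omega^\star)\bigr)<0$, multiplying the established inequality by the larger factor $\spatEig\ge\spatEigMin$ only makes it more negative, so $\spatEig\operatorname{Re}\bigl(\openLoop(\jmath\omega^\star)\bigr)\le\spatEigMin\operatorname{Re}\bigl(\openLoop(\jmath\omega^\star)\bigr)<-\tfrac12$. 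Thus $|\diagTransBlockEig_{\spatEig}(\jmath\omega^\star)|>1$ for \emph{every} $\spatEig\ge\spatEigMin$, and in particular for each nonzero eigenvalue $\spatEigWn$ ($\wn=2,\ldots,\numVeh$), which by uniform boundedness all satisfy $\spatEigWn\ge\spatEigMin$.

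Next I would promote this pointwise strict inequality to a uniform gap. By Lemma~\ref{lem:lapProp}(b) the eigenvalues also obey $\spatEigWn\le\spatEigMax$, so (with $\numVeh$-independent gains) they all lie in the fixed compact interval $[\spatEigMin,\spatEigMax]$. The map $\spatEig\mapsto|\diagTransBlockEig_{\spatEig}(\jmath\omega^\star)|$ is continuous and, by the previous step, strictly greater than one on this interval, hence attains a minimum $m>1$ there. Combining this with Theorem~\ref{lem:blockProd},
\begin{equation*}
	\gamma_\numVeh \;\ge\; |\diagTransBlockN(\jmath\omega^\star)| \;=\; \frac{1}{\weightGain_2}\prod_{\wn=2}^{\numVeh}\bigl|\diagTransBlockEigWn(\jmath\omega^\star)\bigr| \;\ge\; \frac{1}{\weightGain_2}\, m^{\,\numVeh-1}.
\end{equation*}
Taking $\numVeh$th roots yields $\gamma_\numVeh^{1/\numVeh}\ge(\weightGain_2)^{-1/\numVeh}\,m^{(\numVeh-1)/\numVeh}\to m>1$, so $\limsup_{\numVeh\to\infty}\gamma_\numVeh^{1/\numVeh}\ge m>1$, which is exactly harmonic instability.

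The step I expect to be the main obstacle is the uniformity in the third paragraph. The pointwise bound $|\diagTransBlockEig_{\spatEig}(\jmath\omega^\star)|>1$ degrades to one as $\spatEig\to\infty$ (because $\diagTransBlockEig_{\spatEig}\to1$), so without a fixed \emph{upper} bound on the spectrum the factors could crowd down toward one and the exponential rate could be lost. This is precisely where Lemma~\ref{lem:lapProp}(b), together with $\numVeh$-independent gains, is indispensable: it supplies the compact interval on which the continuous factor has a minimum bounded away from one. A minor point to dispatch is that $\omega^\star$ is genuinely finite and not a pole of $\openLoop$, so that $\openLoop(\jmath\omega^\star)$ is a bona fide complex number; this is automatic, since at a pole of $\openLoop$ one would have $\diagTransBlockEigMin(\jmath\omega^\star)\to1$, contradicting $|\diagTransBlockEigMin(\jmath\omega^\star)|>1$.
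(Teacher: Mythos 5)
Your proof is correct and follows essentially the same route as the paper's: evaluate the product form of Theorem~\ref{lem:blockProd} at a single frequency, convert $\|\diagTransBlockEigMin(s)\|_\infty>1$ into $\operatorname{Re}\bigl(\spatEigMin\openLoop(\jmath\omega^\star)\bigr)<-\tfrac12$, conclude that every factor $\diagTransBlockEigWn$ exceeds one in modulus at that frequency, and extract a uniform gap $m>1$ from the compact eigenvalue interval $[\spatEigMin,\spatEigMax]$ to obtain exponential growth. If anything, you are slightly more careful than the paper on two minor points: you do not assume the supremum defining the $\mathcal{H}_\infty$ norm is attained (the paper picks a maximizing frequency $\omega_0$), and you retain the factor $1/\weightGain_2$ from the product form and dispose of it via the $N$th root, whereas the paper's final display omits it.
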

 \begin{proof} The condition states that closed-loop block
 $\diagTransBlockEigMin(s)$ defined in (\ref{eq:diagBlockEigMin}) corresponding 
 to the lower bound on the eigenvalues of Laplacian is greater in the
 $\mathcal{H}_{\infty}$ norm than one. Let $\omega_0$ be the frequency at
 which the magnitude frequency response of this block \annote[ZH]{attains its
 maximum}{Technicka pripominka: ona zadna takova frekvence existovat nemusi, ne?
 I proto je Hinf norma definovana nikoliv jako maximum nybrz jako suppremum. A
 nebude toto mit dokonce dopad na dukaz?}. Further let $\alpha+\jmath \beta$
 (with $\sqrt{-1}=\jmath$) be the value of the frequency response of the scaled
 open loop $\openLoopGain$ at $\omega_0$, i.~e., $\openLoopGain(\jmath
 \omega_0)= \alpha + \jmath \beta$. Then the squared modulus of the frequency
 response of the closed-loop $\diagTransBlockEigMin(\lapDom)$ \annote[ZH]{reads}{Pozor,
 to na prave strane je ve skutecnosti kvadrat te leve, ne? Vice v poznamkach.}
\begin{equation}
	|\diagTransBlockEigMin(\jmath \omega_0)|^2 = \left |
	\frac{{\openLoopGain}(\jmath \omega_0)}{1+{\openLoopGain}(\jmath
	\omega_0)} \right |^2=\frac{\alpha^2 + \beta^2}{(\alpha+1)^2 + \beta^2}.
	\label{eq:modulMin}
\end{equation}
Since at $\omega_0$ the closed-loop magnitude frequency response attains its
maximum, the peak is greater than 1, i. e., $|\diagTransBlockEigMin(0)|=1 <
|\diagTransBlockEigMin(\jmath \omega_0)|$. From (\ref{eq:modulMin}) we have
\begin{equation}
	\frac{\alpha^2 +\beta^2}{(\alpha+1)^2 + \beta^2} > 1 \Rightarrow \alpha <
	-\frac{1}{2}.
\end{equation} 
	
The Laplacian eigenvalues can be ordered as $\spatEigMin \leq \spatEig_2 < \ldots \leq \spatEigMax$. We can write $\spatEigWn = \kappa_\wn \spatEigMin$ with gain $\kappa_\wn \in \langle 1,\frac{\spatEigMax}{\spatEigMin} \rangle$. By Lemma \ref{lem:lapProp} all eigenvalues are real, so $\kappa$ is real as well.
By assumption in the theorem the bounds on $\kappa$ do not depend on the number
of vehicles.

Now the transfer function of each term in the product (\ref{eq:blockProd}) is
$\diagTransBlockEigWn(\lapDom) = \frac{\kappa_\wn
\openLoopGain(\lapDom)}{1+\kappa_\wn \openLoopGain(\lapDom)}$ with
$\kappa_\wn=\frac{\spatEigWn}{\spatEigMin}$. We prove that all such
transfer functions also have the magnitude frequency response at $\omega_0$
greater than 1 (not necessarily their maximum there). The value of
$\openLoopGain(\jmath \omega_0)$ is still written as $\alpha+\jmath \beta$. The
\annote[ZH]{squared modulus}{Jen si i tady zkontroluj, jestli to neni ve
skutecnosti ten kvadrat. Mne se to nechtelo kontrolovat.} of the closed-loop frequency response
at $\omega_0$ is
\begin{equation}
	|\diagTransBlockEigWn(\jmath \omega_0)|^2 = \left |
\frac{\kappa_\wn {\openLoopGain}(\jmath \omega_0)}{1+\kappa_\wn
\openLoopGain(\jmath \omega_0)} \right |^2=1- \frac{2 \kappa_\wn
\alpha+1}{(\kappa_\wn \alpha+1)^2 +\kappa_\wn^2 \beta^2}. \label{eq:modulK}
\end{equation}
Since $\alpha < -\frac{1}{2}$, $\kappa_\wn$  is real and greater than $1$ and
the denominator is positive, the sign of the fraction must be negative and
(\ref{eq:modulK}) is greater than 1. Therefore, all transfer functions
$\diagTransBlockEigWn(\lapDom)$ at $\omega_0$ have the modulus greater than 1.
	
The modulus of the frequency response parametrized by $\kappa$ attains its
minimum at $\omega_0$ for some $\kappa_0$, independent of the number of
vehicles. This smallest modulus at $\omega_0$ is denoted as
\annote[ZH]{$\zeta_{\min}>1$}{Dovolil jsem si i tady nahradil velke pismeno $H$
radsi reckym pismenem. Tady to $H_\mathbf{min}$ navic i trosku kolidovalo s
$H_\infty$.} and it is unchanged for any and all diagonal blocks. By Theorem
\ref{lem:blockProd}, the diagonal blocks are connected in series, therefore the
total gain of the platoon is given by product
\begin{equation}
|\diagTransBlockN (\jmath \omega_0)| =
\prod_{i=2}^{\numVeh}|\diagTransBlockEigWn(\jmath \omega_0)| \geq
(\zeta_{\min})^{N-1}.
\end{equation}
The exponential growth of the peak in the magnitude frequency response has thus
been proved. Although the eigenvalues of the Laplacian change upon adding more
vehicles into the platoon, the bound on eigenvalues as well as the corresponding
gain $\zeta_{\min}$ remain constant.
\end{proof} 

To summarize, it suffices to test only a single transfer function
$\diagTransBlockEigMin(s)$ instead of the model of the whole platoon. If this
transfer function is larger in $\mathcal{H}_\infty$ than one and there is a
lower bound on the Fiedler's eigenvalue, the harmonic instability must occur and
cannot be overcome by any linear controller. Note, however, that even systems
with only one integrator in the open loop can be harmonically unstable.
\section{Special cases and simulations}
A particularly important case is when there are two integrators in the open
loop.
\begin{lemma}
	For at least two integrators in the open loop, frequency response of each term 	in the product (\ref{eq:blockProd}) has a resonance peak, i. e. $|\diagTransBlockEigWn(0)|=1 < |\diagTransBlockEigWn(\jmath \omega_\wn)|$ for some $\omega_\wn$.
	\label{lem:overshoot}
\end{lemma}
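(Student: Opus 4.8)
The plan is to reduce the resonance-peak claim to the real-part condition on the loop gain that already appeared in the proof of Theorem~\ref{thm:stringInstab}, and then to read that condition off from the two poles at the origin. Each factor in the product (\ref{eq:blockProd}) is $\diagTransBlockEigWn(\lapDom) = \frac{\spatEigWn\openLoop(\lapDom)}{1 + \spatEigWn\openLoop(\lapDom)}$, so writing $\spatEigWn\openLoop(\jmath\omega) = \alpha + \jmath\beta$ and repeating the computation that produced (\ref{eq:modulMin}) and (\ref{eq:modulK}) gives
\begin{equation}
	|\diagTransBlockEigWn(\jmath\omega)|^2 = 1 - \frac{2\alpha + 1}{(\alpha+1)^2 + \beta^2},
\end{equation}
so that $|\diagTransBlockEigWn(\jmath\omega)| > 1$ if and only if $\alpha = \mathrm{Re}\big(\spatEigWn\openLoop(\jmath\omega)\big) < -\tfrac12$. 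Since two integrators imply at least one, the Corollary gives $|\diagTransBlockEigWn(0)| = 1$; hence it suffices to exhibit a single frequency $\omega_\wn$ at which the real part of the loop gain drops below $-\tfrac12$.

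Second, I would extract the low-frequency behaviour from the double pole at the origin. Writing $\openLoop(\lapDom) = \olnum(\lapDom)/\olden(\lapDom)$ with $\lapDom^2 \mid \olden(\lapDom)$, factor $\olden(\lapDom) = \lapDom^2\,\tilde{\olden}(\lapDom)$ with $\tilde{\olden}(0)\neq 0$ and $\olnum(0)\neq 0$. Substituting $\lapDom = \jmath\omega$ and expanding for $\omega\to 0^+$ yields $\spatEigWn\openLoop(\jmath\omega) = -\frac{\spatEigWn\,\olnum(0)}{\omega^2\,\tilde{\olden}(0)} + O(1)$, because $(\jmath\omega)^2 = -\omega^2$ and the first-order corrections to $\olnum$ and $\tilde{\olden}$ contribute only to the imaginary part at this order. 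Thus $\mathrm{Re}\big(\spatEigWn\openLoop(\jmath\omega)\big)$ diverges to $\pm\infty$ as $\omega\to 0^+$, with sign determined by $\olnum(0)/\tilde{\olden}(0)$.

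The main obstacle is fixing this sign: the peak appears at small $\omega$ precisely when $\olnum(0)/\tilde{\olden}(0) > 0$, so that the loop gain runs off to $-\infty$ along the negative real axis (phase $\to -180^\circ$). I would obtain this positivity from the standing assumption that every diagonal block is asymptotically stable. Asymptotic stability means $\olden(\lapDom) + \spatEigWn\olnum(\lapDom) = \lapDom^2\tilde{\olden}(\lapDom) + \spatEigWn\olnum(\lapDom)$ is Hurwitz; in particular the feedback around the double integrator cannot be positive, which forces the low-frequency gain $\spatEigWn\,\olnum(0)/\tilde{\olden}(0)$ to be positive (equivalently, the Nyquist plot of $\spatEigWn\openLoop$ must emanate from infinity along the negative real axis, as it does for any stabilizable type-two loop). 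With the sign settled, $\mathrm{Re}\big(\spatEigWn\openLoop(\jmath\omega)\big)\to -\infty$, so there is $\omega_\wn>0$ with $\mathrm{Re}\big(\spatEigWn\openLoop(\jmath\omega_\wn)\big) < -\tfrac12$, and therefore $|\diagTransBlockEigWn(\jmath\omega_\wn)| > 1 = |\diagTransBlockEigWn(0)|$, which is the asserted resonance peak. I expect the only genuinely delicate point to be making the sign/stability implication rigorous rather than merely physical; once it is granted, the remaining steps are the routine expansion above.
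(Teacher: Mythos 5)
Your reduction of the claim to exhibiting a frequency where $\mathrm{Re}\big(\spatEigWn\openLoop(\jmath\omega)\big) < -\tfrac12$ is correct (it is exactly the algebra the paper uses inside Theorem~\ref{thm:stringInstab}), and your low-frequency expansion is also correct. The gap is the sign step: closed-loop asymptotic stability does \emph{not} force $\olnum(0)/\tilde{\olden}(0) > 0$ once the open loop is allowed to have right-half-plane poles besides the two integrators, and nothing in the paper excludes that. Concretely, take $\spatEigWn = 1$ and
\begin{equation}
	\openLoop(\lapDom) = \frac{2\lapDom^2 + 4\lapDom + 1}{\lapDom^2(\lapDom-1)},
\end{equation}
so $\olnum(0)=1$, $\tilde{\olden}(0)=-1$. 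The closed-loop denominator is $\lapDom^3+\lapDom^2+4\lapDom+1$, which is Hurwitz (all coefficients positive and $1\cdot 4 > 1$), so the block is asymptotically stable; yet $\mathrm{Re}\big(\openLoop(\jmath\omega)\big) = \frac{1-6\omega^2}{\omega^2(1+\omega^2)} \to +\infty$ as $\omega \to 0^+$. The Nyquist plot emanates from $+\infty$, not $-\infty$; this is perfectly compatible with stability because the open-loop pole at $\lapDom=1$ requires a nonzero encirclement count of $-1$. The lemma's conclusion still holds here — one finds $\mathrm{Re}\big(\openLoop(\jmath\omega)\big) \approx -2.7 < -\tfrac12$ near $\omega \approx 0.78$ — but this peak lives at a finite frequency that your $\omega\to 0^+$ expansion cannot detect, so your argument proves the statement only under the extra (unstated) hypothesis that the open loop has no unstable poles other than the integrators; in that restricted case your Hurwitz-coefficient reasoning does pin down both signs and the proof closes.

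By contrast, the paper's proof is a one-line citation: each factor $\diagTransBlockEigWn$ is a stable unity-feedback loop around $\spatEigWn\openLoop$, which has (at least) two integrators, so Theorem~1 of Seiler et al.\ gives $\|\diagTransBlockEigWn\|_\infty > 1$ directly. That theorem is proved by a complex-analytic argument — the sensitivity $1/(1+\spatEigWn\openLoop)$ has a double zero at $\lapDom=0$, and a function analytic in the right-half plane, bounded by $1$ there in modulus, cannot touch the value $1$ at a boundary point with vanishing derivative unless it is constant — which uses only closed-loop stability and the double integrator, and needs no sign information at all. To repair your proof without importing that result, you would either need to add the open-loop stability assumption above, or replace the low-frequency expansion by an argument of this maximum-modulus type.
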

\begin{proof}
	Each term in the product in Theorem \ref{lem:blockProd} is a closed-loop
	transfer function with at least two integrators in the open loop. For such system it was proved in Theorem 1 in \cite{Seiler2004a} that it must have $H_\infty$ norm greater than 1. 
\end{proof} 
Using the fact that $||\diagTransBlockEigMin(s)||_\infty > 1$ with at least two
integrators in the open loop, we satisfy the
conditions in Theorem \ref{thm:stringInstab} and can extend the results of
\cite{Seiler2004a}.
\begin{corollary}
	Vehicular platoon with uniformly bounded eigenvalues of Laplacian and at least	two integrators in the open loop is harmonically unstable. This cannot be 	cancelled by any linear controller.
\end{corollary}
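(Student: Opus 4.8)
The plan is to obtain this corollary as a direct specialization of Theorem~\ref{thm:stringInstab}, using Lemma~\ref{lem:overshoot} to verify the single nontrivial hypothesis. Uniform boundedness of the nonzero eigenvalues of $\lapl$ is assumed outright, so the only thing left to establish is the peaking condition $\| \diagTransBlockEigMin(s) \|_\infty > 1$ on the one block associated with the lower bound $\spatEigMin$.

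First I would observe that the scaled open loop $\openLoopGain(s) = \spatEigMin \openLoop(s)$ inherits the integrator count of $\openLoop(s)$: multiplying by the nonzero real constant $\spatEigMin$ changes neither the number of poles at the origin nor the number of zeros there. Hence, whenever $\openLoop(s)$ carries at least two integrators, so does $\openLoopGain(s)$, and the block $\diagTransBlockEigMin(s)$ defined in~(\ref{eq:diagBlockEigMin}) is precisely a closed loop whose open loop has at least two integrators.

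Next I would apply Lemma~\ref{lem:overshoot} (equivalently, Theorem~1 of \cite{Seiler2004a}) to this block. That lemma furnishes a finite frequency $\omega_0$ at which $|\diagTransBlockEigMin(\jmath\omega_0)| > |\diagTransBlockEigMin(0)| = 1$, so the quantity defining the $\mathcal{H}_\infty$ norm already exceeds $1$ at a concrete frequency. This sidesteps any concern about the supremum being unattained, since a strict overshoot at one finite frequency forces $\| \diagTransBlockEigMin(s) \|_\infty > 1$. With both hypotheses of Theorem~\ref{thm:stringInstab} now in hand, that theorem yields harmonic instability, i.e.\ $\diagTransBlockN$ grows exponentially in $\numVeh$.

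Finally, for the claim that no linear controller can cancel the instability, I would note that the entire argument used nothing about $\controllerTf(s)$ beyond the fact that the resulting open loop $\openLoop(s) = \controllerTf(s)\vehicleTf(s)$ has at least two integrators. Since two integrators are needed anyway to combine velocity tracking with constant spacing, the peaking is dictated by the integrator count alone and is therefore controller-independent. The main (and essentially only) subtlety is the bookkeeping that $\spatEigMin$-scaling preserves the integrator structure, ensuring Lemma~\ref{lem:overshoot} applies verbatim to the $\spatEigMin$-block rather than merely to the unscaled open loop.
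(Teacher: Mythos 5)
Your proposal is correct and follows essentially the same route as the paper: the paper likewise invokes Theorem~1 of \cite{Seiler2004a} (via Lemma~\ref{lem:overshoot}) to conclude $\| \diagTransBlockEigMin(s) \|_\infty > 1$ whenever the open loop has at least two integrators, and then applies Theorem~\ref{thm:stringInstab} under the assumed uniform bound. Your added bookkeeping---that scaling $\openLoop(s)$ by the constant $\spatEigMin$ preserves the integrator count, and that a strict overshoot at a finite frequency settles the supremum issue---is a careful elaboration of steps the paper leaves implicit, not a different argument.
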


Theorem \ref{thm:uniformBound} proves uniform bound for arbitrary asymmetric formation, so we can extend results of \cite{Tangerman2012} to varying asymmetry
and arbitrary dynamical models with two integrators.
\begin{corollary}
	Asymmetric bidirectional control with $\epsilon_i \leq \epsilon_{\max} < 1 \,
	\forall i, \, \forall \numVeh$ and with at least two integrators in the open
	loop is harmonically unstable.
	\label{cor:asymHarmUnstable}
\end{corollary}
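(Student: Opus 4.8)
The plan is to verify that the two hypotheses of Theorem~\ref{thm:stringInstab} hold under the stated assumptions, since that theorem already delivers harmonic instability. Concretely, I would check (i) uniform boundedness of the nonzero Laplacian eigenvalues, and (ii) the peaking condition $\| \diagTransBlockEigMin(s) \|_\infty > 1$; the corollary then follows by composition of established results, with no new machinery required.

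For (i), uniform boundedness comes directly from Theorem~\ref{thm:uniformBound}. The standing assumption $\epsilon_i \leq \epsilon_{\max} < 1$ for all $i$ and all $\numVeh$ is exactly the hypothesis of that theorem, which supplies the explicit bound $\spatEigMin \geq \frac{(1-\epsilon_{\max})^2}{2+2\epsilon_{\max}} > 0$. The point to emphasize is that this bound depends only on $\epsilon_{\max}$ and not on $\numVeh$, so it qualifies as uniform in the sense of the Definition of uniform boundedness. For (ii), I would invoke Lemma~\ref{lem:overshoot}. The block $\diagTransBlockEigMin(s)$ is the closed loop built from the scaled open loop $\openLoopGain(s) = \spatEigMin \openLoop(s)$; since $\spatEigMin > 0$, multiplying by this constant does not alter the number of poles at the origin, so $\openLoopGain$ retains at least two integrators. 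Lemma~\ref{lem:overshoot} (which rests on \cite{Seiler2004a}) then guarantees a resonance peak, i.e. there is some $\omega$ with $|\diagTransBlockEigMin(\jmath\omega)| > |\diagTransBlockEigMin(0)| = 1$, hence $\| \diagTransBlockEigMin(s) \|_\infty > 1$.

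With both hypotheses in hand, Theorem~\ref{thm:stringInstab} immediately yields harmonic instability of the platoon. I do not expect a genuine obstacle, as this is a chaining argument; the only steps demanding care are confirming that the scaling by $\spatEigMin$ preserves the two integrators (so Lemma~\ref{lem:overshoot} legitimately applies to the scaled open loop) and that the bound from Theorem~\ref{thm:uniformBound} is truly $\numVeh$-independent. Equivalently, one could observe that once Theorem~\ref{thm:uniformBound} establishes uniform boundedness, the statement reduces verbatim to the immediately preceding Corollary.
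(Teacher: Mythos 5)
Your proposal is correct and follows essentially the same route as the paper: it chains Theorem~\ref{thm:uniformBound} (uniform boundedness from $\epsilon_i \leq \epsilon_{\max} < 1$) with the resonance-peak condition $\|\diagTransBlockEigMin(s)\|_\infty > 1$ obtained from Lemma~\ref{lem:overshoot} (Seiler's two-integrator result), and then invokes Theorem~\ref{thm:stringInstab}, which is exactly the paper's composition via the preceding corollary. Your explicit check that scaling the open loop by $\spatEigMin > 0$ preserves the two integrators is a detail the paper leaves implicit, but it is the same argument.
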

It was proved in \cite{Herman2013,Tangerman2012} that if the asymmetric platoon
uses identical asymmetries $\epsilon_i=\epsilon$, $\mu_i=1$, the eigenvalues of
Laplacian are given in closed form as $\spatEigWn = -2 \sqrt{\epsilon} \cos
\spatFreqWn + 1 + \epsilon$, where $\spatFreqWn$ is given as the $\wn$th
solution of the nonlinear equation $\sin(N\spatFreqWn) -
\sqrt{\frac{1}{\epsilon}} \sin \big((N+1) \spatFreqWn \big)$ on the interval
$\langle0, \pi\rangle$.  The Laplacian eigenvalues are thus bounded and the bounds
$\spatEigMin \geq (1 - \sqrt{\epsilon})^2, \spatEigMax \leq (1 + \sqrt{\epsilon})^2$ do not depend on $\numVeh$. Such formation satisfies the
conditions of Corollary \ref{cor:asymHarmUnstable}.

Another special case, which is harmonically unstable, is the predecessor following algorithm with $\epsilon_i=0$. On the other hand, harmonic stability
of symmetric bidirectional control ($\epsilon_i=1$) for double integrator model was proved in \cite{Veerman2007}. 
 \begin{figure}[t] 
\centering
	\begin{subfigure}[b]{0.35\textwidth}
	\includegraphics[width=1\textwidth]{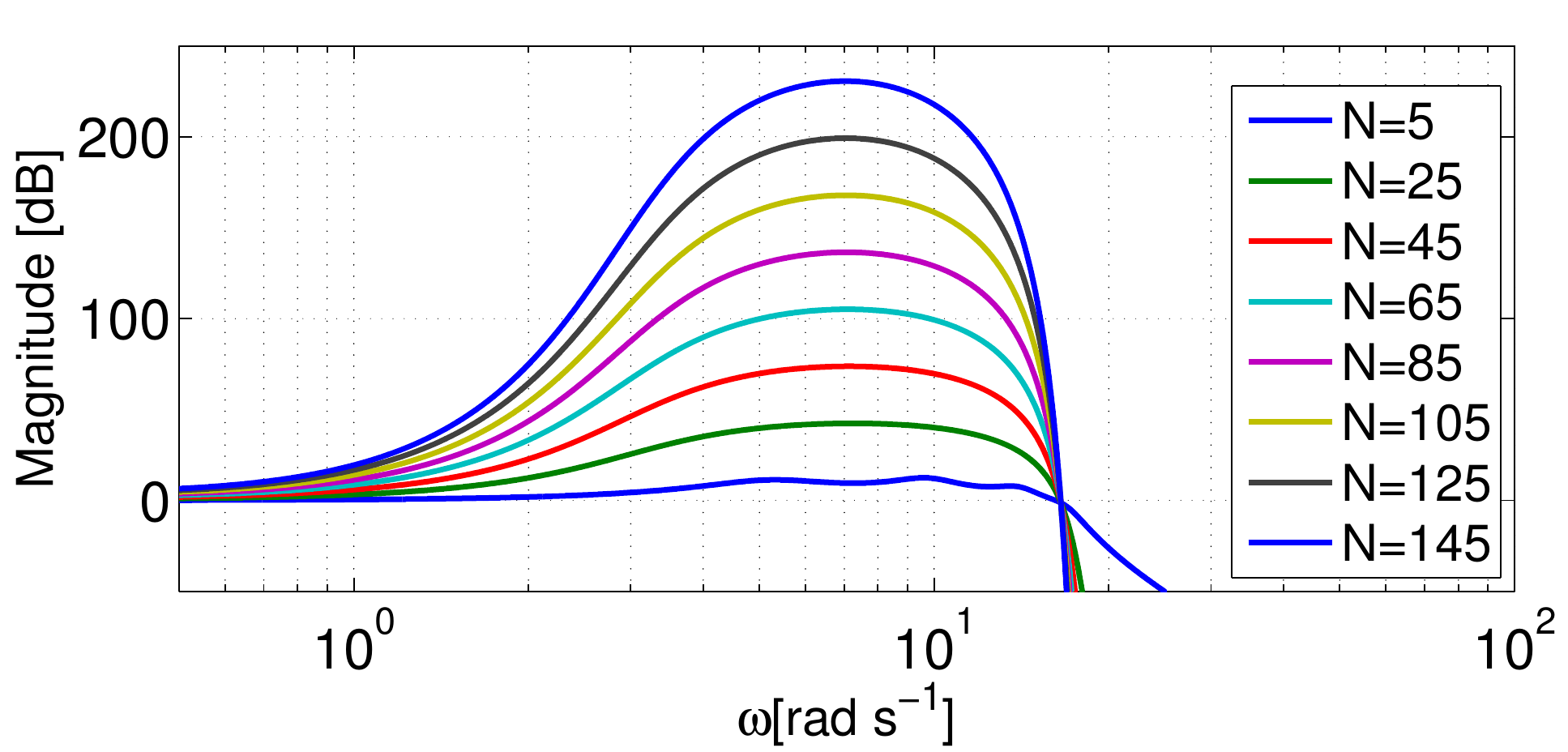}
	\caption{Freq. response of asymmetric control}
	\label{fig:stringInstab}
	\end{subfigure}   
	\begin{subfigure}[b]{0.35\textwidth} 
	\includegraphics[width=1\textwidth]{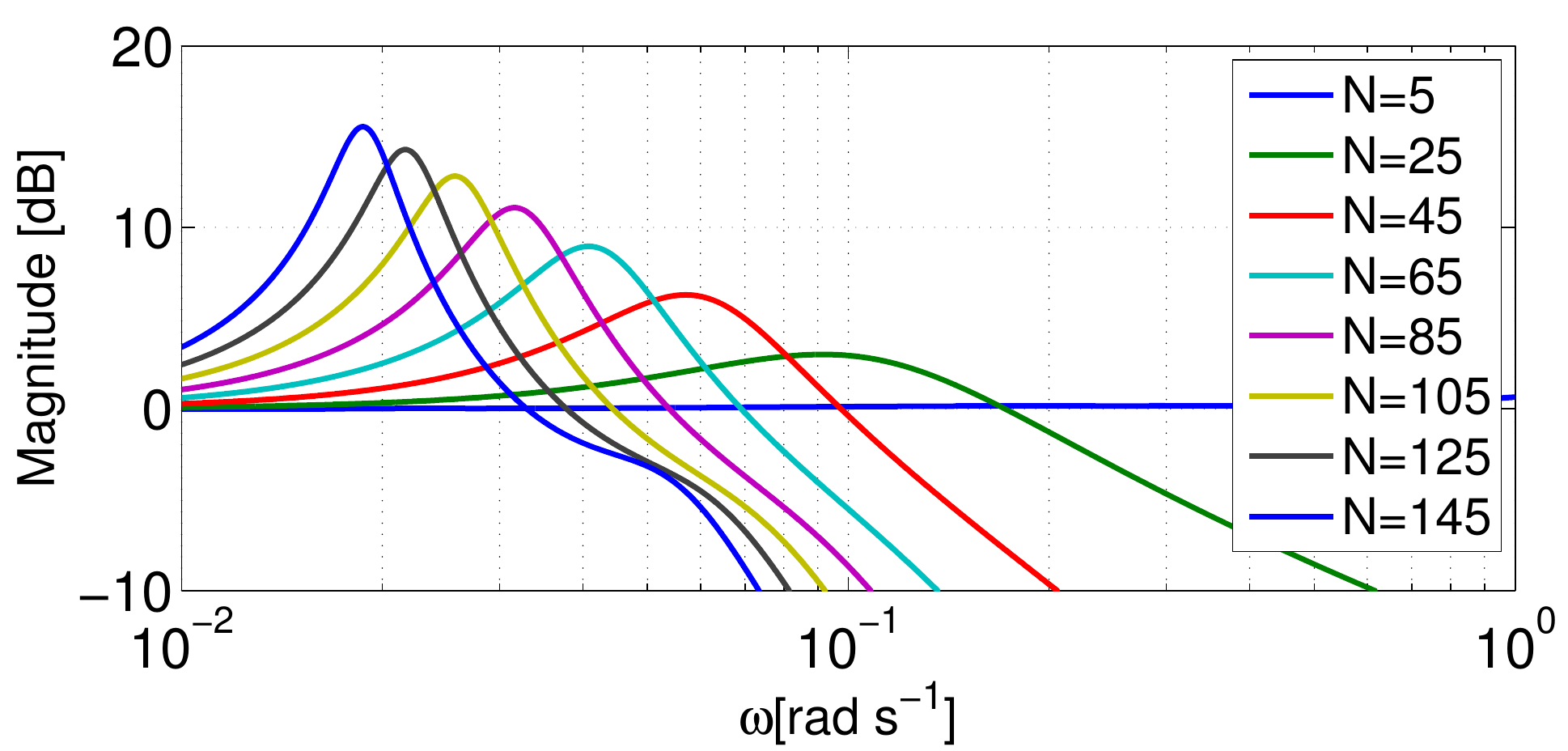}	
	\caption{Freq. response of symmetric control}\label{fig:stringStab}  
	\end{subfigure} 
	\begin{subfigure}[b]{0.35\textwidth}
	\includegraphics[width=1\textwidth]{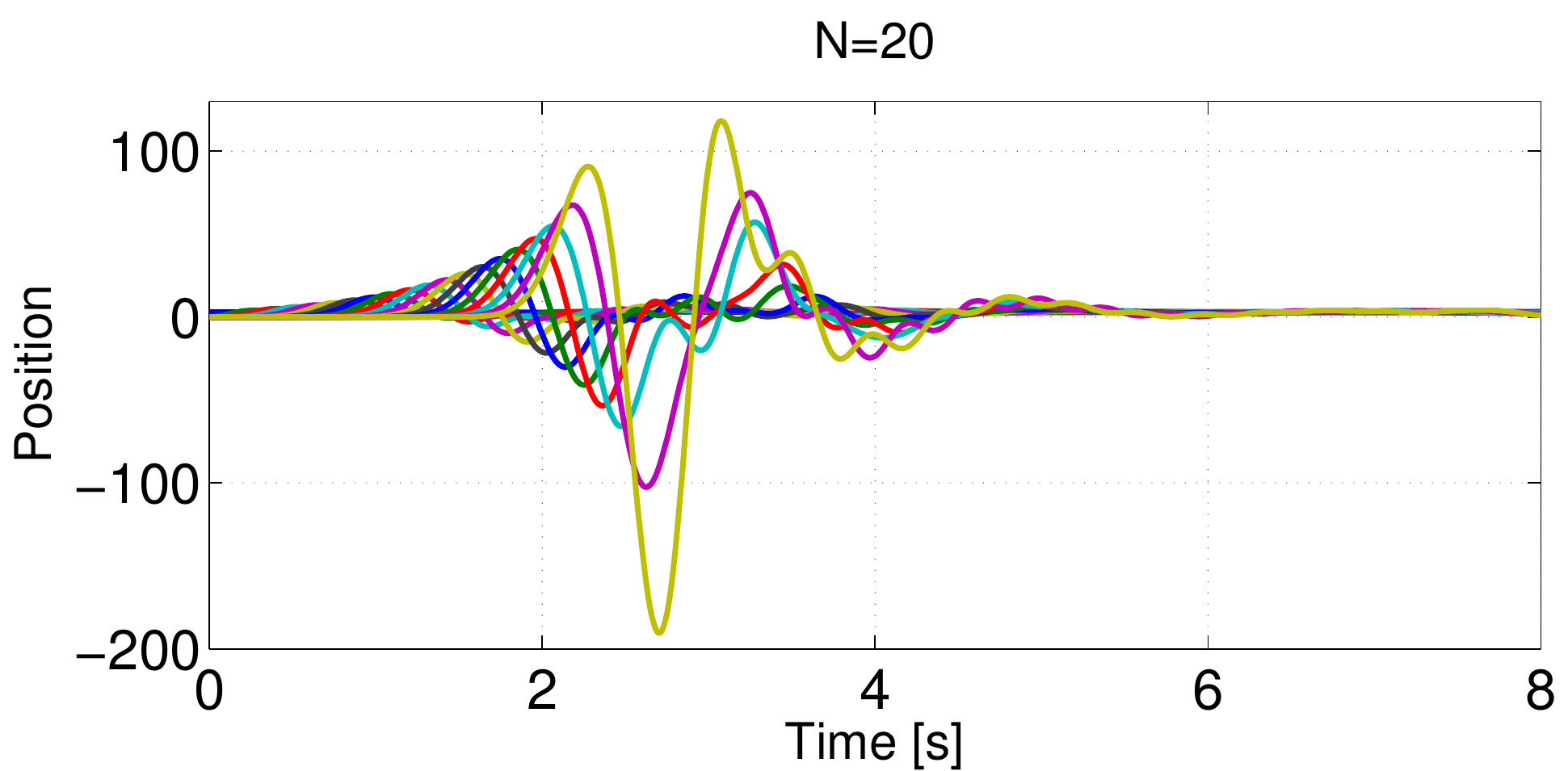}	
	\caption{Response to the leader's step in position.}\label{fig:stepResp}
	\end{subfigure}
	\caption{Figures \ref{fig:stringInstab} and \ref{fig:stringStab} show frequency
	responses for a vehicular platoon with a growing number of vehicles. Figure \ref{fig:stepResp} shows a response to leader's unit step change
	in position for an asymmetric platoon with 20 vehicles, $\epsilon=0.5$.}
	\vspace{-10pt}
\end{figure} 

The simulation results comparing the asymmetric control with $\epsilon_i=0.5$ and the symmetric control with $\epsilon_i=1$ are shown in Fig.~\ref{fig:stringInstab} and \ref{fig:stringStab}. For the asymmetric control scheme it is apparent that as the number of vehicles grows, the peak in the magnitude frequency response
grows exponentially and it is localized at almost identical and wide frequency
range for any number of cars. The figure \ref{fig:stepResp} shows step
response of the platoon, which is oscillating and has very high overshot. The
models used in all cases are $\controllerTf(\lapDom)=\frac{110 s^2 + 43s + 3}{s^2 + 2.9s+ 1}$, $\vehicleTf(\lapDom)=\frac{1}{s^2}$. The controller has been chosen so that the overall system is asymptotically stable for any number of vehicles.
  
\section{Conclusion}
We dealt with a vehicular platoon controlled in a distributed and asymmetric way
where each vehicle only measures the distance to its immediate neighbors. 
We studied \textit{harmonic instability} of
the platoon, which is a term 
for \textit{exponential scaling}
of the $\mathcal{H}_\infty$ norm of the transfer function of the platoon as the
number of vehicles in the platoon grows. 

The key condition for harmonic
instability is the uniform boundedness of the Laplacian eigenvalues. For
platoons with uniform boundedness we proposed a simple
test consisting of evaluation of the
$\mathcal{H}_{\infty}$ norm of a closed-loop transfer function of a single
vehicle.
The proof is based on a \textit{product form} of the transfer function
from the input of the second vehicle to the position of the last vehicle. 

We proved uniform boundedness for platoon with stronger front gains. In the case
of two or more integrators in the open-loop transfer function and uniform bound
on eigenvalues, harmonic instability cannot be overcome by any linear
controller.
The benefits of a uniform boundedness are thus
paid for by a very bad scaling in the frequency response.
Nonetheless, harmonic instability can also occur even in a situation with a single integrator in the
open-loop model.


\appendices
\section{Proof of Theorem \ref{thm:uniformBound}}
Before we proceed to the proof, we state one useful Lemma.
\begin{lemma}{\cite[Cor. 6.1.6]{Horn1996}}
	Let $A=[a_{ij}] \in \mathbb{R}^{n\times n}$ and let $p_1, \ldots, p_n$ be positive numbers. Consider the matrix $B=P^{-1}A P$ with $P = \mathrm{diag}(p_1, \ldots, p_n)$ and $b_{ij}=[p_j a_{ij}/p_i]$. Then all eigenvalues of $A$ lie in the union of Ger\v{s}gorin disks
	\begin{equation}
		\bigcup_{i=1}^{n}\left\{ z \in \mathbb{C}: |z-a_{ii}| \leq\frac{1}{p_i} \sum_{j=1, j \neq i}^{n} p_j |a_{ij}|  \right\}
	\end{equation}
	\label{cor:diagTransf}
\end{lemma}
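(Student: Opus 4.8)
The plan is to recognize this lemma as nothing more than the classical Ger\v{s}gorin disk theorem applied to the diagonally scaled matrix $B=P^{-1}AP$, exploiting the fact that $B$ is similar to $A$ and therefore shares its spectrum. Two elementary facts drive the argument: (i) similar matrices have identical eigenvalues, so it suffices to locate the eigenvalues of $B$; and (ii) conjugation by the diagonal matrix $P$ leaves the diagonal of $A$ untouched while rescaling each off-diagonal entry, so that the row sums of the $|b_{ij}|$ reproduce exactly the scaled radii $\frac{1}{p_i}\sum_{j\neq i} p_j|a_{ij}|$ appearing in the stated disks.

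First I would compute the entries of $B$ explicitly. Since $P=\mathrm{diag}(p_1,\dots,p_n)$ and $P^{-1}=\mathrm{diag}(1/p_1,\dots,1/p_n)$, one obtains $b_{ij}=p_j a_{ij}/p_i$; in particular the diagonal is preserved, $b_{ii}=a_{ii}$, and for $j\neq i$ we have $|b_{ij}|=(p_j/p_i)|a_{ij}|$, which is legitimate because every $p_i>0$. Next I would take an arbitrary eigenvalue $\spatEig$ of $A$; by similarity it is also an eigenvalue of $B$, so there is a nonzero vector $x=[x_1,\dots,x_n]^T$ with $Bx=\spatEig x$. Choosing an index $k$ with $|x_k|=\max_i|x_i|$ guarantees $|x_k|>0$. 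Isolating the $k$-th scalar equation gives $(\spatEig-b_{kk})x_k=\sum_{j\neq k} b_{kj}x_j$, and dividing by $x_k$ and applying the triangle inequality together with $|x_j|/|x_k|\le 1$ yields $|\spatEig-b_{kk}|\le \sum_{j\neq k}|b_{kj}|$. Substituting $b_{kk}=a_{kk}$ and $|b_{kj}|=(p_j/p_k)|a_{kj}|$ then places $\spatEig$ in the $k$-th disk $\{z:|z-a_{kk}|\le \frac{1}{p_k}\sum_{j\neq k}p_j|a_{kj}|\}$. Since $\spatEig$ was arbitrary, every eigenvalue of $A$ lies in the union over $k$, which is the claim.

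There is no deep obstacle here: the result is a routine corollary of Ger\v{s}gorin's theorem, and the only points requiring care are bookkeeping ones. I would make sure to note that the eigenvector is nontrivial so that the maximal component $|x_k|$ is strictly positive, and that positivity of the $p_i$ is exactly what makes the scaled radii well defined and keeps the diagonal fixed under conjugation. If a fully self-contained exposition is not required, the entire eigenvector manipulation can be replaced by a one-line citation: apply the standard Ger\v{s}gorin theorem (\cite[Thm.~6.1.1]{Horn1996}) directly to $B$ and invoke similarity invariance of the spectrum. Finally, since $A$ is real its eigenvalues may nonetheless be complex, so the disks are genuinely regions in $\mathbb{C}$; the argument above makes no use of realness and thus covers this case automatically.
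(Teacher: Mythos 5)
Your proof is correct and matches the paper's approach exactly: the paper gives no proof of this lemma at all, simply citing \cite[Cor.~6.1.6]{Horn1996}, and your argument---applying Ger\v{s}gorin's theorem to the similar matrix $B=P^{-1}AP$ via the maximal-component eigenvector inequality, with $b_{ii}=a_{ii}$ preserved and off-diagonal entries rescaled---is precisely the standard textbook proof behind that citation. Nothing is missing; your attention to the positivity of the $p_i$ and the nontriviality of the maximal eigenvector component covers the only delicate points.
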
	
\begin{proof}[\textbf{Proof of Theorem \ref{thm:uniformBound}}]
With Lemma \ref{cor:diagTransf} we can get tighter bounds on $\spatEigWn$ by transforming the reduced Laplacian $\redLapl$ into a \textit{diagonally dominant} form $B=P^{-1}\redLapl P$. After the transformation, each row of $B$ reads
\begin{equation}
	\left[
		\begin{matrix}
			 \ldots \!&\! 0 \!&\! -\frac{p_{i-1}}{p_i}\weightGain_i & \weightGain_i(1+\epsilon_i) & -\frac{p_{i+1}}{p_i}\weightGain_i \epsilon_i \!&\! 0\!&\! \ldots
		\end{matrix} 
	\right].
\end{equation}
To make it diagonally dominant, it must hold
\begin{equation}
	-\frac{p_{i-1}}{p_i}\weightGain_i + \weightGain_i(1+\epsilon_i) - \frac{p_{i+1}}{p_i}\weightGain_i \epsilon_i \geq 0 \quad \forall i. \label{eq:difEq}
\end{equation}
This is a difference inequality with variable $p$. 
We take $p$ as
\begin{equation}
	p = \frac{1}{2}\left(1+\frac{1}{\epsilon_{\max}}\right), \label{eq:p}
\end{equation}
which satisfies the inequality. Then $P$ is a diagonal matrix $P=\mathrm{diag}(1, p, p^2, \ldots,
p^{\numVeh-2})$. Applying this transformation to $\redLapl$, we get the $i$th row
	\begin{equation}
		\left[
			\begin{matrix}
				\ldots\!&\!0 \!& -\frac{1}{p}\weightGain_i & \weightGain_i(1+\epsilon_i) & -p\weightGain_i \epsilon_i & 0\!&\!\ldots
			\end{matrix} 
		\right]. \label{eq:rowP}
	\end{equation}	
The sum in each row equals the distance $d_i = \weightGain_i(1+\epsilon_i) - \frac{1}{p}\weightGain_i -p \weightGain_i\epsilon_i \label{eq:rowSum}$ of Ger\v{s}gorin's circle from zero and should be positive. After simple calculations, we obtain
\begin{equation}
	d_i = \weightGain_i \left[ -\frac{\epsilon_i}{2} \frac{1-\epsilon_{\max}}{\epsilon_{\max}}  + \frac{1-\epsilon_{\max}}{1+\epsilon_{\max}}\right].
\end{equation}
Assume, without loss of generality, that $\weightGain_i \geq 1$. Then $d_i$ in the equation above is minimized for $\epsilon_i = \epsilon_{\max}$. Therefore, the smallest distance of Ger\v{s}gorin disks from zero, hence also the lower bound on the eigenvalues is 
\begin{equation}
	\spatEig_{\min} \geq -\frac{1-\epsilon_{\max}}{2}  +
	\frac{1-\epsilon_{\max}}{1+\epsilon_{\max}} = \frac{(1-\epsilon_{\max})^2}{2+2\epsilon_{\max}}.
\end{equation}
Furthermore, it is positive for any $\epsilon_i \leq \epsilon_{\max}$, making $B$ diagonally dominant. To summarize, we found a bound which does not depend on the matrix size.
\end{proof}

\section{Proof of Theorem \ref{lem:blockProd}} 
Before the proof, we need the following technical result.
\begin{lemma}
Let $h_\wn=g_\wn  \,	\eigVect_{\numVeh \wn}$. Then we have
\begin{IEEEeqnarray}{rCl}
	&&\sum_{\wn=2}^{\numVeh} h_\wn \spatEigWn^m = 0 \text{ for }
	m = 0,1,\ldots, \numVeh-3, \label{eq:sumHiL}
	\\
	&&\sum_{\wn=2}^{\numVeh} h_\wn \frac{1}{\spatEigWn} =
	\frac{1}{\weightGain_2}.  \label{eq:sumHiInvL}
\end{IEEEeqnarray}
\label{lem:sumHi}
\end{lemma}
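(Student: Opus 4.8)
The plan is to package both identities into a single generating function and reduce them to one resolvent computation. Introduce, for a free complex variable $z$,
\[
	S(z) = \sum_{\wn=2}^{\numVeh} \frac{h_\wn}{z-\spatEigWn},
\]
whose Laurent expansion at infinity is $S(z)=\sum_{m\geq 0}\big(\sum_{\wn=2}^{\numVeh}h_\wn\spatEigWn^m\big)z^{-(m+1)}$ and whose value at the origin is $S(0)=-\sum_{\wn=2}^{\numVeh}h_\wn/\spatEigWn$. Thus (\ref{eq:sumHiL}) and (\ref{eq:sumHiInvL}) both follow once $S(z)$ is in closed form. First I would record that $h_1=0$: since $\spatEigWn\neq 0$ for $\wn\geq 2$ and the first row of $\lapl$ vanishes, every eigenvector $\eigVect_\wn$ with $\wn\geq 2$ has zero first entry, and $e_1^T\lapl=0$ makes $e_1^T$ the left eigenvector dual to $\eigVect_1=\oneVect$; hence the first row of $\matEigVect^{-1}$ is $e_1^T$ and $g_1=(\matEigVect^{-1})_{12}=0$. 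Adjoining the harmless $\wn=1$ term then gives, in the diagonalizable case, $S(z)=e_\numVeh^T(z\Id_\numVeh-\lapl)^{-1}e_2$, the $(\numVeh,2)$ entry of the resolvent.

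Next I would compute this entry by Cramer's rule, exploiting that $z\Id_\numVeh-\lapl$ is tridiagonal with an all-but-one-zero first row. Deleting row $2$ and column $\numVeh$ and expanding the minor along its first row $[\,z,0,\ldots,0\,]$ leaves an upper-triangular $(\numVeh-2)\times(\numVeh-2)$ block whose diagonal entries are the subdiagonal terms $(z\Id_\numVeh-\lapl)_{k,k-1}=\weightGain_k$; the cofactor is therefore $(-1)^{\numVeh+2}\,z\prod_{k=3}^{\numVeh}\weightGain_k$. Since $\det(z\Id_\numVeh-\lapl)=z\prod_{\wn=2}^{\numVeh}(z-\spatEigWn)$, the factor $z$ cancels and
\[
	S(z)=\frac{(-1)^{\numVeh}\prod_{k=3}^{\numVeh}\weightGain_k}{\prod_{\wn=2}^{\numVeh}(z-\spatEigWn)}.
\]

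Both claims then drop out. The numerator is constant and the denominator has degree $\numVeh-1$, so $S(z)=O\big(z^{-(\numVeh-1)}\big)$; matching the Laurent series annihilates every coefficient $\sum_\wn h_\wn\spatEigWn^m$ with $m+1\leq\numVeh-2$, i.e. $m=0,\ldots,\numVeh-3$, which is (\ref{eq:sumHiL}). For (\ref{eq:sumHiInvL}) I evaluate $S(0)$ and use that the nonzero eigenvalues are the spectrum of $\redLapl$ (Lemma \ref{lem:lapProp}c), so $\prod_{\wn=2}^{\numVeh}\spatEigWn=\det\redLapl$ and $\sum_{\wn=2}^{\numVeh}h_\wn/\spatEigWn=\prod_{k=3}^{\numVeh}\weightGain_k/\det\redLapl$. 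It remains to show $\det\redLapl=\prod_{k=2}^{\numVeh}\weightGain_k$, which I would obtain from the trailing-minor recursion $\delta_j=\weightGain_j(1+\epsilon_j)\delta_{j+1}-\weightGain_j\epsilon_j\weightGain_{j+1}\delta_{j+2}$ for the tridiagonal $\redLapl$ (with $\delta_\numVeh=\weightGain_\numVeh$, $\delta_{\numVeh+1}=1$): substituting the inductive hypotheses and using $(1+\epsilon_j)-\epsilon_j=1$ collapses it to $\delta_j=\weightGain_j\delta_{j+1}$, so $\det\redLapl=\prod_{k=2}^{\numVeh}\weightGain_k$ and the sum equals $1/\weightGain_2$.

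I expect the main obstacle to be the diagonalizability assumption hidden in the identification $S(z)=e_\numVeh^T(z\Id_\numVeh-\lapl)^{-1}e_2$: the moments $\sum_\wn h_\wn\spatEigWn^m$ coincide with the resolvent's Laurent coefficients only when $\lapl$ has no nontrivial Jordan block, since generalized eigenvectors would otherwise contribute. I would dispose of this either by noting that the $\spatEigWn$ are distinct for generic $\weightGain_i,\epsilon_i$ and passing to a limit, or, more robustly, by treating the closed form for $S(z)$ above—proved by Cramer's rule with \emph{no} diagonalizability assumption—as primary and reading both identities directly off it. The rest is pure bookkeeping: the cofactor sign $(-1)^{\numVeh+2}$, the cancellation of $z$, and the base cases of the determinant recursion.
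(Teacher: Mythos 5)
Your proof is correct, and it reaches the result by a genuinely different route than the paper. The paper, too, begins by converting the spectral sums into matrix entries through the similarity $\lapl = \matEigVect\matJ\matEigVect^{-1}$, but it then handles the two identities separately: for (\ref{eq:sumHiL}) it writes $\sum_\wn h_\wn\spatEigWn^m=e_\numVeh^T\matEigVect\matJ^m\matEigVect^{-1}e_2=(\lapl^m)_{\numVeh 2}$ and invokes the band structure of powers of a tridiagonal matrix ($\lapl^m$ has bandwidth $m$, so the $(\numVeh,2)$ entry vanishes for $m\leq\numVeh-3$); for (\ref{eq:sumHiInvL}) it passes to the reduced Laplacian and simply asserts the value $(\redLapl^{-1})_{\numVeh-1,1}=1/\weightGain_2$. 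You instead fold both identities into the single resolvent entry $S(z)=e_\numVeh^T(z\Id_\numVeh-\lapl)^{-1}e_2$, evaluate it in closed form by Cramer's rule, and read (\ref{eq:sumHiL}) off the decay order $O\big(z^{-(\numVeh-1)}\big)$ at infinity and (\ref{eq:sumHiInvL}) off $S(0)$ combined with $\det\redLapl=\prod_{k=2}^{\numVeh}\weightGain_k$; your computations check out (the cofactor sign, the upper-triangular minor with diagonal $\weightGain_3,\ldots,\weightGain_\numVeh$, and the determinant recursion are all right, and both formulas agree with the paper's, e.g.\ your leading Laurent coefficient $(-1)^\numVeh\prod_{k=3}^{\numVeh}\weightGain_k$ equals $(\lapl^{\numVeh-2})_{\numVeh 2}$). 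What your route buys: one computation serving both claims; an actual proof of the constant $1/\weightGain_2$, which the paper leaves as an unproved assertion about $\redLapl^{-1}$; and an explicit proof that $h_1=0$, which is genuinely needed for the $m=0$ case of (\ref{eq:sumHiL}) (the lemma's sum starts at $\wn=2$ while the matrix identity sums over all $\wn$) but which the paper only states later, inside the proof of Theorem~\ref{lem:blockProd}. What the paper's route buys: the band-width argument is shorter and free of sign bookkeeping. Finally, both proofs are confined to a diagonalizable $\lapl$ in exactly the same way---the paper's step $\sum_\wn h_\wn\spatEigWn^m=e_\numVeh^T\matEigVect\matJ^m\matEigVect^{-1}e_2$ fails for a nontrivial Jordan block just as your partial-fraction identification of $S(z)$ does---and the paper concedes this restriction only in Appendix~B (the ``non-defective'' case), so your explicit flagging of the issue matches, rather than falls short of, the paper's level of rigor.
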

\vspace{-20pt}
\begin{proof}
The terms in
$h_\wn=g_\wn \, \eigVect_{\numVeh\wn }$ can be written as $g_\wn=e_\wn^T
\matEigVect^{-1}e_2$ and $\eigVect_{\numVeh \wn }=e_\numVeh^T \matEigVect$.
Plugging them into (\ref{eq:sumHiL}) yields
\begin{IEEEeqnarray}{rCl}
	\sum_{\wn=2}^{\numVeh} h_\wn \spatEigWn^m &=& e_\numVeh^T {\matEigVect}
	\matJ^m {\matEigVect^{-1}} e_2 
	=
	 e_\numVeh^T \lapl^m e_2 =
	\laplPowEl{\numVeh 2}{m}
\end{IEEEeqnarray}
where $\laplPowEl{ij}{m}$ is the $(i,j)$ element of $\lapl^m$. Laplacian
is a banded matrix with nonzero diagonal and the first subdiagonal and by
powering it, we add new nonzero bands. Hence, $\lapl$ can be powered
at most $\numVeh-3$ times to keep zeros at
$(\numVeh-2)$th subdiagonal and the element
$\laplPowEl{\numVeh 2}{m}=0$ for $m=0,\ldots,N-3$. 
	
Let $\matJ_r =
\matEigVect_r^{-1} \redLapl \matEigVect_r$ be the Jordan form of $\redLapl$.
Equation (\ref{eq:sumHiL}) is obtained in a similar way using
${\matEigVect_r} \matJ_r^{-1} {\matEigVect_r^{-1}}=\redLapl^{-1}$ as
\begin{IEEEeqnarray}{rCl} 
	\sum_{\wn=2}^{\numVeh} h_\wn \frac{1}{\spatEigWn} &=&
	e_{\numVeh-1}^T{\matEigVect_r} \matJ_r^{-1} {\matEigVect_r^{-1}} e_1
	\!=\!\redLaplPowEl{N\!-\!1,1}{-1}
	= \frac{1}{\weightGain_2}. 
\end{IEEEeqnarray}
\end{proof}

\begin{proof}[\textbf{Proof of Theorem \ref{lem:blockProd}}] 
For simplicity, only the case of \textit{non-defective} Laplacian is shown here. First we need to evaluate a characteristic polynomial of $\redLapl$
\begin{equation}
	\det(s\Id_{N-1} + \redLapl) = s^{N-1} + \charPolCoef_{N-2} s^{N-2} + \ldots +
	\charPolCoef_1 s + \charPolCoef_0,
\end{equation}
where $\charPolCoef_{N-2}=\text{Tr}(\redLapl)=\sum_{\wn=2}^{\numVeh}\spatEigWn$ and $\charPolCoef_0 = \prod_{\wn=2}^{\numVeh}\spatEigWn$.
	
The transfer function $\diagTransBlockN(s)$ was defined in
(\ref{eq:transferFunctionNSum}) as
\begin{equation}
	\diagTransBlockN(\lapDom)\!=\!\sum_{\wn=1}^{\numVeh} g_\wn
	\diagTransBlock_i(s) \eigVect_{\numVeh,i}\!=\!\sum_{\wn=1}^{\numVeh} g_\wn
	\eigVect_{\numVeh,i} \frac{\vehNumCoef(\lapDom) \contNumCoef(\lapDom)}{\vehDenCoef(\lapDom) \contDenCoef(\lapDom) + \spatEigWn \vehNumCoef(\lapDom)
	\contNumCoef(\lapDom)}.\label{eq:sumFormTn}
\end{equation}
Since $g_{1}=0$ (the leader
cannot be controlled from the second vehicle), the block corresponding to
$\spatEig_1=0$ does not enter the sum (\ref{eq:sumFormTn}), which then has $N-1$
terms and reads
\begin{IEEEeqnarray}{rCl}
	\diagTransBlockN(s) &=&  \frac{\sum_{\wn=2}^{\numVeh} h_\wn
	\olnum
	\prod_{j=2, j \neq \wn}^{\numVeh} \left[\olden + \spatEig_j \olnum\right] }
	{\prod_{\wn=2}^{\numVeh} \left[ \olden + \spatEigWn \olnum \right]}.
	\label{eq:prodTF}
\end{IEEEeqnarray}
We define $\olden(s)=\vehDenCoef(\lapDom)\contDenCoef(\lapDom)$, 
$\olnum(s)=\vehNumCoef(\lapDom)\contNumCoef(\lapDom)$ and $h_i=g_i
\eigVect_{\numVeh i}$. The argument $s$ is omitted. The numerator of (\ref{eq:prodTF}) is then
	
{
\begin{IEEEeqnarray}{rCl}	
	&&\sum_{i=2}^{\numVeh} h_i \olnum \prod_{\mathclap{j=2, j \neq \wn}}^{\numVeh}
	[\olden + \spatEig_j \olnum] =\sum_{i=2}^{\numVeh} h_i \olnum \Bigg \{ \olden^{N-2}\nonumber 		
	\\ 
	&&
	+
	\left[\olden^{N-3}\olnum \smashoperator{\sum_{j=2, j \neq i}^{\numVeh} }\spatEig_j\right]  
	+ \left[\olden^{N-4}\olnum^{2}\smashoperator{\sum_{j=2, k=2,
	j\neq k \neq i}^{\numVeh}} \spatEig_j\, \spatEig_k \right] \label{eq:numPos}
	 \\
	&&+  \left.\ldots + \left[\olden^{1}\olnum^{N-3}{\sum_{j=2, j \neq
	i}^{\numVeh}}\quad \quad{\prod_{\mathclap{k=2, k \neq i \neq
	j}}^{\numVeh}}\spatEig_k \right] \!+\!\left[\olnum^{N-2}\smashoperator{\prod_{\mathclap{j=2, j \neq
	i}}^{\numVeh}} \spatEig_j\right] \right \}.
	\nonumber
\end{IEEEeqnarray}
\normalsize}
	
Let us put the terms with equal powers of $\olden^{i}\olnum^{j}$ in (\ref{eq:numPos}) together. First, take those with $\olden^{N-2}
\olnum $. The sum $\olden^{N-2}\olnum \sum_{\wn=2}^{N} h_\wn= 0$, using
(\ref{eq:sumHiL}) in Lemma \ref{lem:sumHi}. Second, take those with
$\olden^{N-3}\olnum^2$:
\begin{IEEEeqnarray}{rCl}
	&&\olden^{N-3}\olnum^2 \sum_{\wn=2}^{\numVeh} h_\wn \,\sum_{\mathclap{j=2,
	j\neq \wn}} ^{\numVeh} \spatEig_j 
	= 
	\olden^{N-3}\olnum^2 \sum_{\wn=2}^{\numVeh} h_\wn (\charPolCoef_{N-2}
	- \spatEig_\wn) 	\nonumber	
	\\
	&&
	= \olden^{N-3}\olnum^2 \charPolCoef_{N-2} \sum_{\wn=2}^{\numVeh}
	h_\wn - \olden^{N-3}\olnum^2 \sum_{\wn=2}^{\numVeh}
	h_\wn \spatEigWn = 0.
\end{IEEEeqnarray} 
We used the fact that
$h_\wn (\spatEig_2 + \ldots + \spatEig_{\wn-1} + \spatEig_{\wn+1}
+ \ldots + \spatEig_{\numVeh}) = h_\wn (\charPolCoef_{N-2} - \spatEigWn)$ and then applied Lemma \ref{lem:sumHi}. Using similar constructions we arrive to the fact that all powers of
$\olden^{i}\olnum^{j}$ are multiplied by zero, so they vanish. The only exception is the term
with $\olnum^{N-1}$, for which we get
\begin{equation}
	\olnum^{N-1} \sum_{\wn=2}^{\numVeh}h_\wn \,\, \prod_{\mathclap {j=2, j\neq
	\wn}}^{\numVeh} \spatEigWn = \olnum^{N-1} \charPolCoef_0
	\smashoperator \sum_{\wn=2}^{\numVeh}h_\wn \frac{1}{\spatEigWn} = \olnum^{N-1}
	 \frac{\charPolCoef_0}{\weightGain_2}.
\end{equation}
We used the fact that $\prod_{\wn=2}^{\numVeh}\spatEigWn=\charPolCoef_0$, so
$\prod_{j=2, j\neq \wn}^{\numVeh} \spatEig_j=\frac{\charPolCoef_0}{\spatEig_\wn}$.
The last equality follows from (\ref{eq:sumHiInvL}).	
With all these terms, the fraction in (\ref{eq:prodTF}) simplifies to
\begin{IEEEeqnarray}{rCl}
	\diagTransBlockN(\lapDom) &=&
	\frac{1}{\weightGain_2}\frac{(\vehNumCoef(\lapDom)\contNumCoef(\lapDom))^{N-1}\prod_{\wn=2}^{N}\spatEigWn
	}{\prod_{\wn=2}^{\numVeh}
	\left[ \vehDenCoef(\lapDom)\contDenCoef(\lapDom) + \spatEigWn \vehNumCoef(\lapDom)\contNumCoef(\lapDom) \right]} 
	\nonumber		\\
	&=&
	\frac{1}{\weightGain_2}\prod_{\wn=2}^{\numVeh}
	\frac{\spatEigWn \vehNumCoef(\lapDom)\contNumCoef(\lapDom)}{
	\vehDenCoef(\lapDom)\contDenCoef(\lapDom) + \spatEigWn
	\vehNumCoef(\lapDom)\contNumCoef(\lapDom)}.
\end{IEEEeqnarray}
This is a series interconnection of blocks $\diagTransBlockEigWn(s)$, which
proves the theorem. The cases with defective Laplacian matrices can be treated
in a similar way with all the results valid.
\end{proof}

\ifCLASSOPTIONcaptionsoff
  \newpage
\fi

\bibliographystyle{IEEEtran}  

\bibliography{Papers-TAC-HarmInstab} 

\end{document}